\begin{document}
\title{CAMIG: Concurrency-Aware Live Migration\\ Management of Multiple Virtual Machines in SDN-enabled Clouds}

\author{TianZhang~He,~\IEEEmembership{}%
		Adel N. Toosi,~\IEEEmembership{Member,~IEEE,}
        Rajkumar~Buyya,~\IEEEmembership{Fellow,~IEEE}%
\IEEEcompsocitemizethanks{\IEEEcompsocthanksitem T.Z. He and R. Buyya are with the CLOUDS Lab, School of Computing and Information Systems, University of Melbourne, Australia.
(E-mail: tianzhangh@student.unimelb.edu.au; rbuyya@unimelb.edu.au)
\IEEEcompsocthanksitem A. N. Toosi is with the Department of Software Systems and Cybersecurity, Faculty of Information Technology, Monash University, Australia. (E-mail: adel.n.toosi@monash.edu)}%
}

\IEEEtitleabstractindextext{%
\begin{abstract}
By integrating Software-Defined Networking and cloud computing, virtualized networking and computing resources can be dynamically reallocated through live migration of Virtual Machines (VMs). Dynamic resource management such as load balancing and energy-saving policies can request multiple migrations when the algorithms are triggered periodically.  
There exist notable research efforts in dynamic resource management that alleviate single migration overheads, such as single migration time and co-location interference while selecting the potential VMs and migration destinations. 
However, by neglecting the resource dependency among potential migration requests, the existing solutions of dynamic resource management can result in the Quality of Service (QoS) degradation and Service Level Agreement (SLA) violations during the migration schedule. Therefore, it is essential to integrate both single and multiple migration overheads into VM reallocation planning. In this paper, we propose a concurrency-aware multiple migration selector that operates based on the maximal cliques and independent sets of the resource dependency graph of multiple migration requests. Our proposed method can be integrated with existing dynamic resource management policies. The experimental results demonstrate that our solution efficiently minimizes migration interference and shortens the convergence time of reallocation by maximizing the multiple migration performance while achieving the objective of dynamic resource management.
\end{abstract}

\begin{IEEEkeywords}
Live migration, dynamic resource management, migration scheduling, Software-Defined Networking, cloud computing.
\end{IEEEkeywords}
}

\maketitle

\IEEEdisplaynontitleabstractindextext

\IEEEpeerreviewmaketitle

\IEEEraisesectionheading{\section{Introduction}\label{sec:introduction}}
With the rapid adoption of cloud computing for hosting applications and always-on services, it is critical to provide Quality of Service (QoS) guarantees through the Service Level Agreements (SLAs) between cloud providers and users. In this direction, many research works have investigated various aspects of dynamic resource management, such as delay-aware Virtual Network Function (VNF) placement~\cite{cziva2018dynamic}, load balancing~\cite{wood2009sandpiper, verma2008pmapper, mann2012}, energy-saving~\cite{beloglazov2012optimal}, flow consolidation, scheduled maintenance, as well as emergency migration, in terms of accessibility, quality, efficiency, and robustness of cloud services. Virtual Machine (VM) is one of the major virtualization technologies to host computing and networking resources in cloud data centers. As a dynamic resource management tool, the live VM migration is used to realize the objectives in resource management by relocating VMs between physical hosts without disrupting the accessibility of cloud services \cite{clark2005live}.

Cloud infrastructure and service providers, such as AWS, Azure, and Google, have been integrating live VM and container migration~\cite{redhat-criu,google-e2,verma2015large,borg-criu,ruprecht2018vm} for the purposes, such as higher priority task preemption, kernel and firmware software updates, hardware updates, and reallocation for performance and availability. For example, the Google cluster manager Borg controls all computing tasks and container clusters of up to tens of thousands of physical machines. In Google production fleets, a lower bound of 1,000,000 migrations can be performed monthly~\cite{ruprecht2018vm}. 
These show the critical importance of migration management techniques in dynamic resource reallocation.

\begin{figure}[t]
	\centering
	\includegraphics[width=\linewidth]{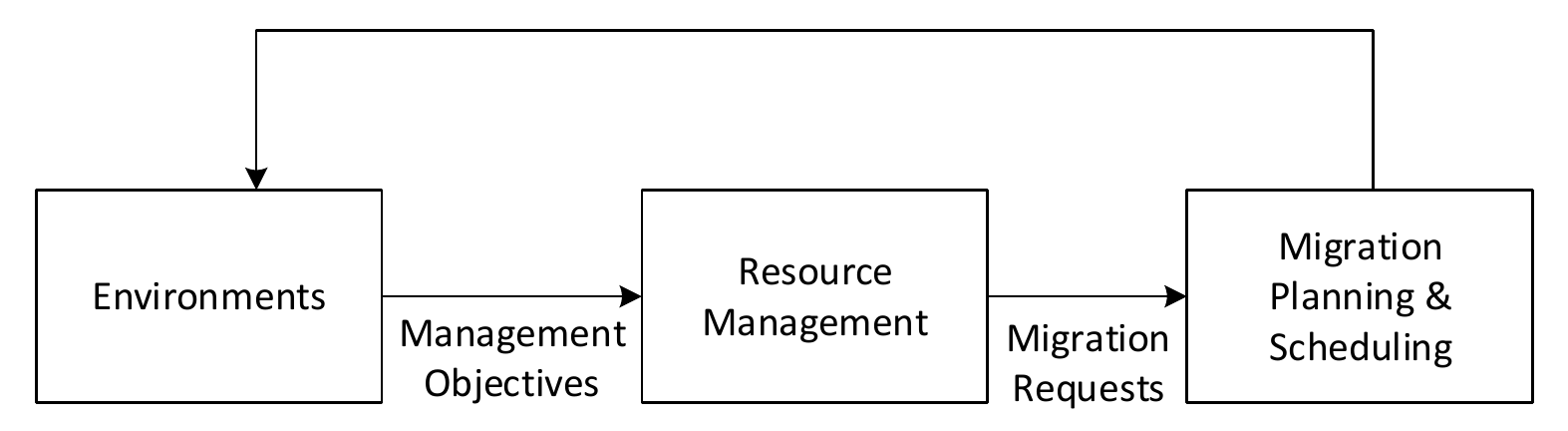}
	\caption{A general migration management framework}
	\label{fig: management-flow}
\end{figure}
Figure~\ref{fig: management-flow} illustrates the general migration management workflow. Based on the various objectives, the resource management algorithms
~\cite{verma2008pmapper, wood2009sandpiper, beloglazov2012optimal, mann2012, xu2014, bi2015application, wu2016energy,peng2019mobility,jiang2020network} 
find the optimal placement by generating multiple live migrations. With the generated multiple migration requests, the migration planning and scheduling algorithm~\cite{ghorbani2012, bari2014cqncr, wang2017virtual,he2021sla} optimizes the performance of multiple migrations, such as total and individual migration time and downtime, while minimizing the migration cost and overheads, such as migration impact on application QoS. On the other hand, the computing and networking resources are reallocated and affected by multiple migrations.

As a resource-intensive operation, live migration consumes both computing and networking resources when transmitting the memory dirty pages from the source to the destination host. It puts stress on both the migrating services and other services in the cloud data centers. Thus, it is crucial to minimize migration interference during dynamic resource management. There are continuous efforts to take migration overheads into consideration during the dynamic resource management \cite{verma2008pmapper, wood2009sandpiper, beloglazov2012optimal, xu2014}. Currently, most migration cost models consider overheads of single migration \cite{akoush2010, jo2017machine, he2019performance}, such as migration time (single execution time), downtime, transferred data with respect to the size of memory, dirty page rate, data compression rate and available bandwidth while allowing multiple migrations in dynamic resource management. For the migration selection, existing resource management algorithms utilize the linear cost model of single migration to minimize the overheads. Then, with the migration requests generated as the input, multiple migration planning and scheduling algorithms~\cite{ghorbani2012, bari2014cqncr, wang2017virtual, he2021sla} decide the sequence of migration requests to achieve the maximal scheduling performance.

There are obvious gaps regarding the multiple migration performance between the existing dynamic resource management policies, the migration cost model and the multiple migration scheduling. 
The total migration time, the time interval between the start of the first migration and the end of the last migration, is the convergence time for the resource management solution.
Overall, the real-time demands for live migration should be met by improving the performance in total migration time. For example, with the nature of highly variable workloads, SLA violations will occur as the resource demand surpasses the provisioned amount. In this case, a faster live migration convergence equals to less SLA violations. 

Resource dependency between two migrations, such as sharing source and destination hosts or network paths, can largely affect the performance of multiple migration scheduling. With the network as a bottleneck, two resource-dependent migrations can only be scheduled sequentially, while independent ones scheduled concurrently \cite{bari2014cqncr, wang2017virtual, he2019performance}.
If large amount of resource dependencies among migrations are generated by dynamic resource management, the performance of multiple migration scheduling will suffer a significant degradation. 
Since single migration overheads are only related to one migration,
it is critical to consider multiple migration overheads in order to generate migration requests with less resource dependencies.

Therefore, we incorporate the resource dependency of multiple migrations into the cost model to bridge the gaps. Based on the maximal cliques and independent sets of the dependency graph of potential migrations, we propose a concurrency-aware migration (CAMIG) selection strategy for migrating VMs and destination hosts of the dynamic resource management.
The \textbf{contributions} of this paper are summarized as follows:
\begin{itemize}
	\item We propose and model the multiple migration selection problem to minimize interference due to resource dependency among multiple migrations while achieving the objective of dynamic resource management.
	\item We introduce the resource dependency graph to model migration concurrency.
	\item We propose a flexible concurrency-aware migration selection strategy for dynamic resource management.
	\item We conduct extensive experiments in an event-driven simulation to show the performance improvement in terms of total migration time in correspondence with resource management objective.
\end{itemize}

The rest of the paper is organized as follows. Related works of migration cost management and multiple migration scheduling are reviewed in Section \ref{section: related}. The system framework and migration overheads are discussed in Section \ref{sect: 3}. The problem model is described in Section \ref{section: problem-model}. Section \ref{section: camig} proposes the concurrency-aware migration selection algorithm. Section \ref{section: evaluation} compares our proposed algorithm with other dynamic resource management algorithms in load-balancing and energy-saving scenarios. Finally, Section \ref{section: conclusion} summarizes the paper.

\section{Related Work} \label{section: related}
\begin{table*}[ht]
	\centering
	\caption{Comparison of approaches on dynamic resource management through live migration}
	\resizebox{\linewidth}{!}{
		\begin{tabular}{|l|lclll|}
			\hline
			algorithm							& resource management	&	single migration overhead	& dependency aware			&	migration performance						&	migration scheduling \\
			\hline
			\hline
			FFD \cite{verma2008pmapper}			&	load/energy			&	memory size			&  -							&   sum of migration cost						&	-					\\
			HARMONY \cite{singh2008server}		&   load				&	CPU, network		&  -							&	single exe. time							&	one-by-one					\\
			Sandpiper \cite{wood2009sandpiper}	&   load				&	memory size			&  -							&	single exe. time, migration number			&	-					\\
			Xiao et al. \cite{xiao2012dynamic}	&	load/energy			&	migration number	&  -							&	migration number							&	-					\\
			lrmmt \cite{beloglazov2012optimal}	&	load/energy			&	memory size			&  -							&	migration number							&	-					\\
			iAware \cite{xu2014}				&	flexible			&	single exe. time, computing	& - 		&	sum of normalized cost						&	one-by-one			\\			
			Our work (CAMIG)	  				&	flexible			&	migration model, computing		& computing, network sharing		&	total mig. time, downtime					&	multiple scheduling	\\	
			\hline	
	\end{tabular}}
	
	\label{tb: related-work}
\end{table*}
Many dynamic resource management solutions utilize live migration as a tool to achieve objectives, such as load-balancing~\cite{verma2008pmapper,singh2008server,wood2009sandpiper,mann2012}, energy efficiency~\cite{bi2015application,wu2016energy}, network delay~\cite{peng2019mobility}, and communication cost~\cite{jiang2020network}. Among these solutions, some resource management algorithms consider a linear model of the total migration overheads as the sum of individual migration overhead~\cite{verma2008pmapper,singh2008server,wood2009sandpiper,mann2012,beloglazov2012optimal,xu2014,wu2016energy}. 
However, existing research only considers the objectives of resource management while neglecting the multiple migration overheads and migration scheduling performance. 
Generally, during the dynamic resource management, there are three steps to generate migration requests: source host selection; VM selection; and destination host selection.
The overhead or interference model of single migration \cite{akoush2010, jo2017machine, he2019performance} is considered during the VM and destination selections.

For the VM and destination host selection, many dynamic resource management policies consider single migration overheads in terms of the memory size of migrating VM, single migration time, and the impact of one migration on other VMs located in the source or destination host, such as CPU, bandwidth of host network interface, and application bandwidth.
In the load balancing scenario, Verma et al. \cite{verma2008pmapper} estimated the migration cost based on the deduction of application throughput. It selects the smallest memory size VMs from the over-utilized hosts and assigns them to the under-utilized hosts in the First Fit Decreasing (FFD) order. Singh et al. \cite{singh2008server} proposed a multi-layer virtualization system HARMONY. It migrates VMs and data from hotspots on servers, network devices, and storage nodes. The load balancing algorithm is a variant of Toyoda multi-dimensional knapsack problem based on the evenness indicator Extended Vector Product (EVP). It considers the single live migration impact on application performance based on CPU congestion and network overheads. Wood et al. \cite{wood2009sandpiper} proposed  the load balancing algorithm Sandpiper that selects the smallest memory size VM from one of the most overloaded hosts to minimize the migration overheads. Mann et al. \cite{mann2012} focused on the VM and destination selection for the load balance of application network flows by considering the single migration cost model based on the dirty page rate, memory size, and available bandwidth.

In the energy-saving scenario, Xiao et al. \cite{xiao2012dynamic} investigated dynamic resource allocation through live migration. The proposed algorithm avoids the over-subscription while satisfying the resource needs of all VMs based on exponentially weighted moving average to predict the future loads. It also minimizes the physical machines regarding the energy consumption. Similarly, LR-MMT \cite{beloglazov2012optimal} focused on energy saving with local regression based on history utilization to avoid over-subscription. It chooses the least memory size VM from the over-utilized host and the most-energy saving destination.  Wu et al. ~\cite{wu2016energy} also studied the same problem of maximizing the power saving through VM consolidation by limiting individual migration costs. 
With the input of candidate VMs and destinations provided by other resource management algorithms, iAware \cite{xu2014} is a migration selector minimizing the single migration cost in terms of single migration execution time and host co-location interference. It considers dirty page rate, memory size, and available bandwidth for the single migration time. 
They argue that co-location interference from a single live migration on other VMs in the destination host in terms of performance degradation is linear to the number of VMs hosted by a physical machine in Xen.
However, it only considers one-by-one migration scheduling.

Taking the migration task list generated by resource algorithms as input, migration scheduling algorithms focus on minimizing the migration time by efficiently scheduling them. To find a possible sequence of migrations, one-by-one scheduling \cite{ghorbani2012} focused on avoiding the deadlock on the available resource of physical hosts. The multiple migration planning and scheduling algorithms \cite{bari2014cqncr, wang2017virtual, he2021sla} focused on the migration performance in terms of minimizing the total migration time by scheduling given migration tasks concurrently when necessary.
Table \ref{tb: related-work} summarizes representative related works and the proposed generic solution for existing dynamic resource management algorithms in management target, migration overhead, interference, migration performance, and migration scheduling method.

\section{Live Migration in Resource Management} \label{sect: 3}
We first introduce the background of live migration management including system overview and single cost model. Then, we discuss the resource dependency problem.
\subsection{System Overview} \label{section: overview}
\begin{figure}[t!]
	\centering
	\includegraphics[width=0.80\linewidth]{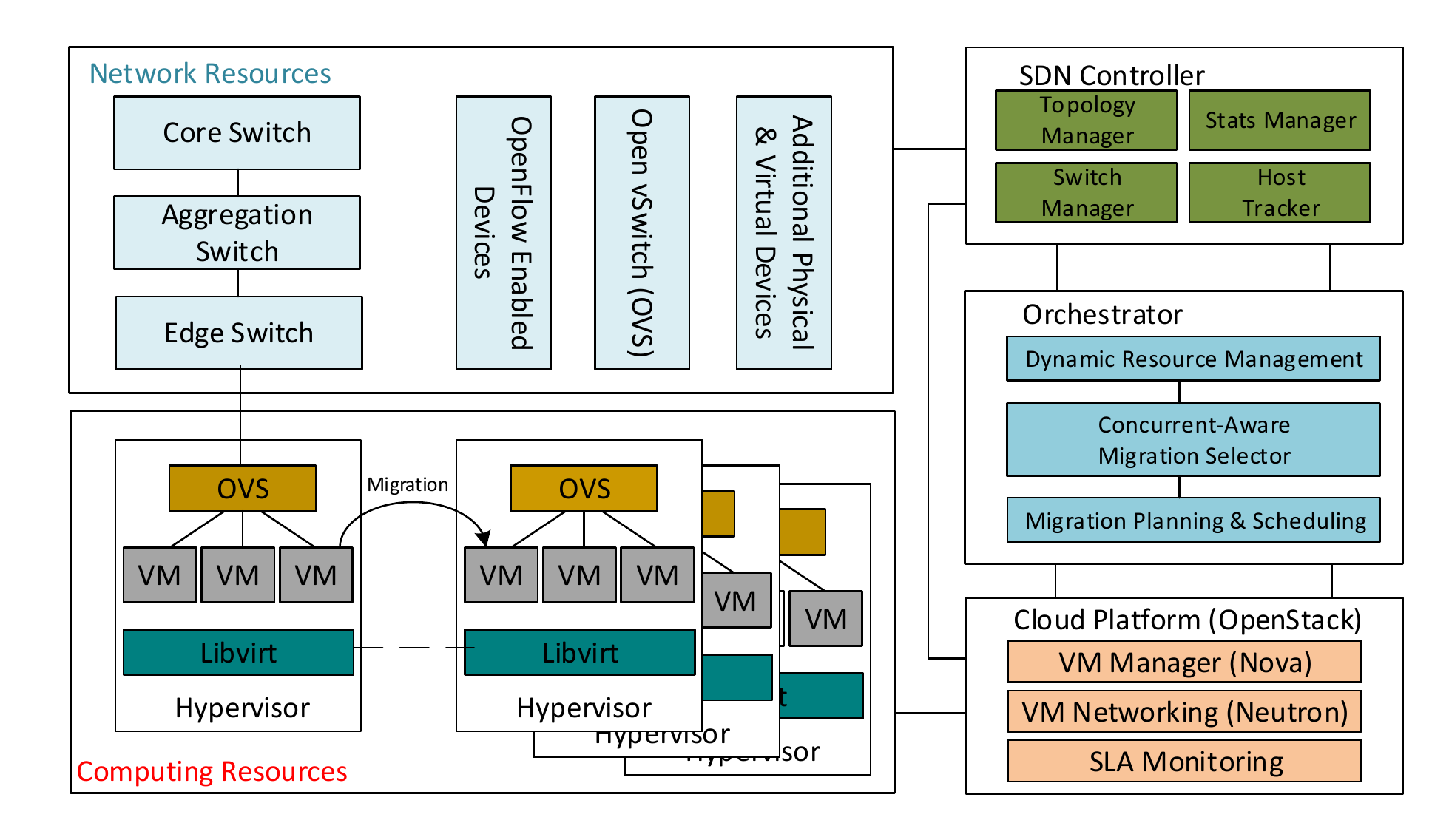}
	\caption{System Overview}
	\label{fig: overview}
\end{figure}
By integrating Software-Defined Networks (SDN) \cite{mckeown2008openflow}, the SDN-enabled cloud data centers have a centralized solution for the monitoring, planning, and scheduling of virtualized computing and networking resources \cite{son2017taxonomy}.
Fig.~\ref{fig: overview} illustrates the migration framework in the orchestration layer. The dynamic resource manager integrated with migration selector and multiple migration scheduler based on both monitoring computing resource and network resources. 
VMs are hosted on physical machines to provide various cloud services. 
Computing resources are controlled by VM Manager (VMM), such as OpenStack Nova, while the networking resources (such as available bandwidth and routing) are managed by the SDN controller and VM Networking Service, such as OpenStack Neutron, in a centralized way.
The SDN controller can dynamically manage the routing for migration elephant flows to avoid the congestion and alleviate the impact on cloud services. We can predict the cost of live migration by the available bandwidth between the source and destination hosts.

\subsection{Single Migration Cost Model} \label{section: mig-model}

To better understand the impact of multiple migrations on performance in dynamic resource management settings, we first introduce the mathematical model of a single live migration~\cite{he2019performance}. Live migration can be categorized into two types: post-copy and pre-copy migration. Since the pre-copy migration~\cite{clark2005live} is the most widely used approach in hypervisors (KVM, VMWare, Xen, etc), we consider it as the base model. During the pre-copy live migration for VMs or Containers, the hypervisor or the Checkpoint/Restore agent in the userspace (CRIU) \cite{criu}  iteratively copies the dirty memory pages in the previous transmission interval from the source host to the destination host.

The most important aspect of single migration overheads is the migration time or the single migration execution time. According to the live migration process \cite{clark2005live}, the pre-copy live migration consists of eight phases (see Fig. \ref{fig: live-migration}): pre-migration, initialization, reservation, iterative memory copy, stop-and-copy, commitment, and post-migration. Thus, live migration consumes both computing resources (pre-/post-migration overheads) and networking resources (memory copy and dirty page transmission) \cite{he2019performance}. The total single migration time $T_{mig}$ can be categorized into three parts: pre-migration computing overheads, memory-copy networking overheads, and post-migration computing overheads: 
\begin{equation}\label{eq:total-com-mem}
T_{mig} = T_{pre} + T_{mem} + T_{post}
\end{equation}

\begin{figure}[t!]
	\centering
	\includegraphics[width=\linewidth]{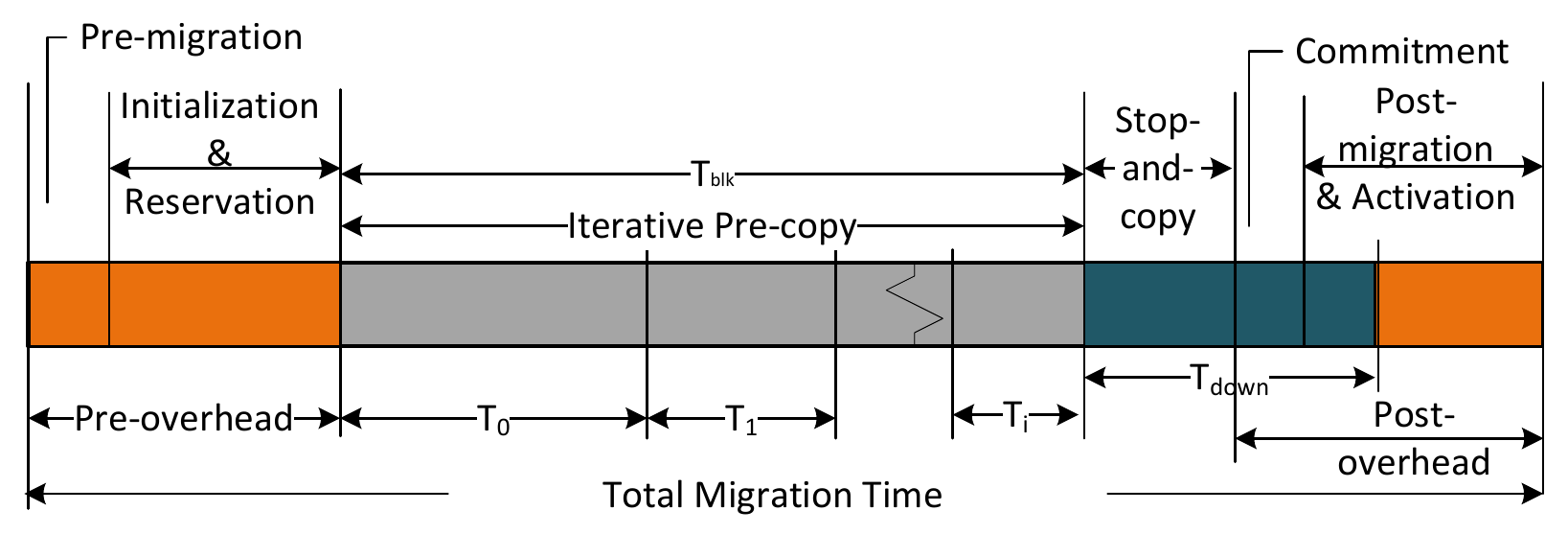}
	\caption{Pre-copy Live Migration}
	\label{fig: live-migration}
\end{figure}

Based on the iterative pre-copy illustrated in Fig. \ref{fig: live-migration}, the migration performance in terms of memory-copy can be represented as \cite{he2019performance}:
\begin{equation} \label{eq: mig-time}
{T_{mem}} = \frac{{\rho  \cdot {Mem}}}{L} \cdot \frac{{1 - {\sigma ^{i + 1}}}}{{1 - \sigma }}
\end{equation}
\begin{equation}\label{eq:round-number}
i = \min \left( {\left\lceil {{{\log }_\sigma }\frac{{{V_{thd}}}}{M}} \right\rceil ,\Theta } \right)
\end{equation}
where the ratio $\sigma  = \rho  \cdot {R \mathord{\left/ {\vphantom {R L}} \right. \kern-\nulldelimiterspace} L}$, $\rho$ is the compression rate of dirty memory, $Mem$ is memory size, $L$ is available bandwidth, $R$ is dirty page rate, $i$ is the total migration round, $\Theta$ denotes the maximum allowed number of iteration rounds, $V_{thd} = T_{dthd} \cdot L_{i-1}$ is the remaining dirty pages need to be transferred in the stop-and-copy phase, and $T_{dthd}$ is the configured downtime threshold.

\subsection{Resource Dependency}
Not only the overheads of the single migration but also resource dependencies among multiple migrations can heavily affect the performance of dynamic resource management.

For dynamic resource management policies, there are three selection steps: (1) selection of source physical hosts that need to be adjusted based on the management objective; (2) selection of VM(s) which need to be migrated from the selected host(s); and (3) selection of destination hosts of live VM migrations among potential candidates. With the input of candidate VMs and available destination hosts, different combinations of source and destination can achieve the same objective of dynamic resource management. However, there is a huge difference between these combinations in the scheduling performance of multiple migrations due to resource dependencies among migrations. If sharing the same source or destination hosts, or part of the network routing, two live migrations are resource-dependent. 

Two resource-dependent migrations can not be scheduled at the same time \cite{bari2014cqncr, he2019performance}. Because, according to equation (\ref{eq: mig-time}), larger bandwidth allocation means a smaller migration execution time and downtime. Thus, the networking resources are the bottlenecks which need to be optimized during the multiple migrations. 
For example, we have a number of migrations partially or entirely sharing network paths. Based on equation (\ref{eq: mig-time}), if scheduled at the same time, experimental results \cite{he2019performance} show that the total migration time will be more than the sum of single execution time.
Thus, sequential scheduling of dependent migrations is the most efficient way to optimize the migration performance\cite{bari2014cqncr, he2019performance}. Meanwhile, migrations which are resource-independent can be scheduled concurrently to reduce the total migration time. Therefore, it is essential to exclusively allocate one network path to only one migration until it is finished to achieve the optimal total migration time, average execution time, and downtime. 

\subsection{Illustrative Example} \label{sect: example}
\begin{figure}[t]
	\centering
	\begin{subfigure}{.45\linewidth}
		\centering
		\includegraphics[width=\linewidth]{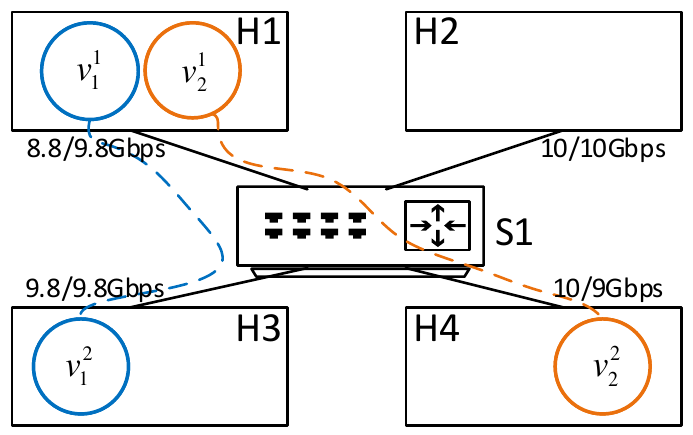}
		\caption{}
		\label{fig: example}
	\end{subfigure}%
	\hfil
	\begin{subfigure}{.37\linewidth}
		\centering
		\includegraphics[width=\linewidth]{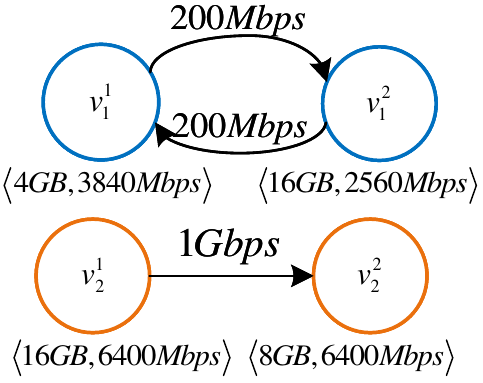}
		\caption{}
		\label{fig: example-vms}
	\end{subfigure}
	\caption{Scenario of Resource Dependencies during Migration Selections: (a) Initial Placement and (b) Virtual Connections between VMs with Memory Size and Dirty Page Rate}
\end{figure}

Fig. \ref{fig: example} shows the initial VM placement of the illustrative example along with the resource dependency among possible migration selections.  Fig. \ref{fig: example-vms} illustrates the virtual connections between VMs and the memory size (GB) and dirty page rate (Mbps) for each. Moreover, the threshold of iteration rounds is 30 and downtime threshold is 0.5 seconds.
The objective of the management policy is to reduce the communication cost by VM consolidation. There are several potential migration combinations which can fulfill the objective: 
	M1: $v_1^1: H1 \to H3$ and $v_2^1: H1 \to H4$; 
	M2: $v_1^1: H1 \to H3$ and $v_2^2: H4 \to H1$; 
	M3: $v_1^2: H3 \to H1$ and $v_2^1: H1 \to H4$; and 
	M4: $v_1^2: H3 \to H1$ and $v_2^2: H4 \to H1$.
We can schedule two resource-independent migrations concurrently (M2 and M3). On the other hand, one migration can only be scheduled in sequence after the completion of another dependent migration (M1 and M4). 

We use Mininet \cite{lantz2010network} to emulate the iterative network transmission of the live migration. The execution time for each potential migration of $v_1^1$, $v_1^2$, $v_2^1$, and $v_2^2$ based on the available bandwidth is 6.2791, 15.0889, 29.1980, and 12.5143 seconds, respectively. The total migration time of combination M1-M4 is 34.8858, 12.4334, 28.4711, and 27.6032 seconds. Moreover, when the service network and migration (control) network are running separately \cite{tsakalozos2017live}, the available bandwidth for each live migration is the same (10 Gbps). Based on multiple migration planning and scheduling algorithms~\cite{bari2014cqncr, wang2017virtual, he2021sla}, the total migration time of four different combinations M1-M4 is 28.1936, 12.1227, 22.6056, and 26.8893 seconds, respectively. Comparing M2 with M1 and M4, since there is no resource-dependent migration in M2, the total migration time is significantly shorter.  Comparing M2 with M3, although there is no network resource sharing in both combinations, the single live migration overheads of M2 is smaller due to the memory size, dirty page rate, and available bandwidth. Summarily, although all the potential combinations can achieve the desired objective, the scheduling performance of multiple migrations varies considerably. Thus, it is essential to minimize both resource dependencies among migration requests and single live migration overheads during dynamic resource management.

\section{Problem Modeling} \label{section: problem-model}
In this section, we model the problem of multiple migration selection to minimize the migration dependency while achieving the objective of dynamic resource management as a Mixed Integer Programming (MIP) problem. 

In the model, $H$ is the set of all candidate destination physical hosts $h \in H$ while $N$ denotes the set of candidate VMs $i \in N$ for the migration. $H_i$ is the set of candidate hosts for VM $i$.
Let binary variable ${y_{(i,h)}} \in \left\{ {1,0} \right\}$ indicate both initial and final placement of VM $i$ in host h. When the VM $i$ is in the initial host $p_i$, ${y_{(i,p_i)}} = 1$. When VM $i$ is in the host $h$ in the final placement, ${y_{(i,h)}} = 1$. Otherwise, ${y_{(i,h)}} = 0$.
Let the binary variable ${{x_{(i,h)}} \in \left\{ {1,0} \right\}}$ indicate whether VM $i$ is in the host $i$ in the final placement. In other words, if VM $i$ is migrated to host $h$, then $x_{(i,h)}=1$ and $h!=p_i$. If VM $i$ is not migrated, then $x_{(i,h)}=1$ and $h=p_i$. Otherwise, $x_{(i,h)}=0$ which indicates that VM $i$ is not in host $h$ in the final placement determined by the dynamic resource management policy.

To generalize the problem, we can omit the VM index $i$ for $h \in H_i$ by adding extra constraints to ${x_{(i,h)}}$ when some destination hosts are not available for the specific VM $i$:
\begin{equation} \label{eq: dest}
\begin{array}{*{20}{c}}
{{x_{(i,{{\overline h }_i})}} = 0}&{\forall {{\overline h }_i} \in {{\overline H }_i} = H\backslash {H_i}}
\end{array}
\end{equation}
where ${{\overline h }_i}$ indicates the unavailable host for VM $i$. 

The migration execution time $t_i^h$ of ${{x_{(i,h)}}} =1, h!=p_i$ can be calculated according to equations (\ref{eq:total-com-mem})-(\ref{eq:round-number}). Furthermore, we normalize the migration execution time based on the largest and smallest execution time among the different source and destination pairs for every VMs.

As there can be only one destination and the VM must be allocated in one and only one host at the same time, we add the following constraints to the binary variable ${{x_{(i,h)}}}$:
\begin{equation} \label{eq: one-mapping}
\begin{array}{*{20}{c}}
{\sum\limits_{h \in H} {{x_{(i,h)}} = 1} }&{\forall i \in N}
\end{array}
\end{equation}

The VM $i$ can only be migrated from source host of the initial placement $h_s = p_i$ where ${y_{(i,p_i)}} = 1$ to the destination host of the final placement $h_d$ that ${y_{(i,h_d)}} = 1$, ${x_{(i,h_d)}} = 1$ and ${x_{(i,p_i)}} = 0$ or not be migrated at all ${x_{(i,h_d)}} = 1, h_d=p_i$. 
Thus, we have the constraints expression as follows:
\begin{equation} \label{eq: mig-num}
\begin{array}{*{20}{c}}
{{x_{(i,h)}} - {y_{(i,h)}} \le 0}&{\forall i,h \in N \times H}
\end{array}
\end{equation}

Constraints of the placement binary variable $y_{(i,h)}$ are:
\begin{equation}
\begin{array}{*{20}{c}}
{1 \le \sum\limits_{h \in H} {{y_{(i,h)}} \le 2} }&{\forall i \in N}
\end{array}
\end{equation}
where $\sum\limits_{h \in H} {{y_{(i,h)}} = 2}$, when VM $i$ is migrated to other host in the final placement. $\sum\limits_{h \in H} {{y_{(i,h)}} = 1}$, when VM $i$ is still in host $p_i$ in the final placement.

Let ${{z_{(i,j,{h_1}, {h_2})}}}$ denote the binary variable indicating whether VM $i$ and $j$ are migrated to destination $h_1$ and $h_2$:
\begin{equation} \label{eq: mig-dep}
\begin{array}{*{20}{c}}
{{z_{(i,j,{h_1},{h_2})}} \in \left\{ {1,0} \right\}}&{\forall i,j \in N,{h_1},{h_2} \in H}
\end{array}
\end{equation}
where $z_{(i,j,h1,h2)}=1$, if $ y_{(i,h1)}=1$, $y_{(j,h2)}=1$ and $p_i!=h1$, $ p_j!=h2$. Otherwise, $z_{(i,j,h1,h2)}=0$.

There is a resource dependency graph $G_{dep}$ for all possible migrations. Let $v_{s,d}$ denote a migration with source host $s$ and destination host $d$. 
If node $v_{p_i,h1}$ and $v_{p_j,h2}$ are connected in graph $G_{dep}$, then edge $e_{(i,j,h1,h2)}=1$. This indicates that potential migrations of VM $i$ from host $p_i$ to $h1$ and VM $j$ from host $p_j$ to $h2$ are resource-dependent which can only be scheduled in a sequential manner.
Thus, the resource dependency between two potential migrations can be represented as:
\begin{equation}
e_{(i,j,h1,h2)} \cdot z_{(i,j,h1,h2)}
\end{equation}

Let ${O_{init}}$ and ${O_{tar}}$ denote the initial score and target score of dynamic resource management and $\varepsilon$ represent the tolerant value for accepted range. Let $O({x_{(i,h)}})$ denote the objective score achieved after all migrations based on $x_{(i,h)}$ indicator. Thus, the constraints of final placement for dynamic resource management can be represented as:
\begin{equation} \label{eq: target}
\begin{array}{*{20}{c}}
{\left| {O\left( {{x_{(i,h)}}} \right) - {O_{tar}}} \right| \le \varepsilon }&{\forall (i,h) \in N \times H}
\end{array}
\end{equation}
In practice, we can replace (\ref{eq: target}) for a specific placement score function. For example, in load balancing policies, let $w_i$ and $w_j$ denote the load of VM $i$ and $j$. We can represent the constraints of dynamic resource target for the final placement as:
\begin{equation} \label{eq: lb}
{\left| {\sum\limits_{i \in N} {{x_{(i,{h_1})}} \cdot {w_i} - \sum\limits_{j \in N} {{x_{(i,{h_2})}}} }  \cdot {w_j}} \right| \le \varepsilon^{'} }
\end{equation}
where ${\forall ({h_1},{h_2}) \in H \times H:{h_1} \ne {h_2}}$ and $\varepsilon^{'}$ is the tolerant value among the physical hosts.

In addition, let $C_{\left( {Mem,Core,Disk,Work} \right)}^h$ = $(1,1,1,1)$ denote the normalized computing resource capacity of physical host $h$ for memory $Mem$, CPU $Core$, storage disk $Disk$, and total workload $Work$. Therefore, the constraints of computing resources, such as workload, can be represented by:
\begin{equation} \label{eq: res-cap}
\begin{array}{*{20}{c}}
{\sum\limits_{i \in N} {{x_{(i,h)}} \cdot {w_i} \le } C_{\left( {Work} \right)}^h}&{\forall h \in H}
\end{array}
\end{equation}

The single and multiple migration overheads, $Inter_{single}$ and $Inter_{multi}$, are calculated as:
\begin{equation}
Inte{r_{single}} = \sum\limits_{i \in N} {\left( {{y_{(i,{p_i})}} - {x_{(i,{p_i})}}} \right)\cdot t_i^h}
\end{equation}
\begin{equation}
Inte{r_{multi}} = \sum\limits_{i,j \in N,h1,h2 \in H} {\left( {t_i^{h1} + t_j^{h2}} \right) \cdot e \cdot z}
\end{equation}
where $e$ and $z$ omit the subscripts for a concise equation.

Therefore, the objective of the problem in terms of minimizing both single migration overheads and resource dependencies among multiple migration requests can be formulated as:
\begin{equation} \label{eq: model-objective}
\min({Inte{r_{single}} + Inte{r_{multi}}})
\end{equation}
subject to constraints (\ref{eq: dest}) - (\ref{eq: res-cap}).

The objective function contains two parts: the first objective is for the sum of single migration overhead, where ${t_i^h}$ indicates single migration time of VM $i$ from source host $p_i$ to destination host $h$. Note that, although only migration time is modeled, it can be extended to other interference, such as CPU congestions, heterogeneous links, bandwidth overheads on other applications, and the number of co-located VMs in the destination host. The second part is multiple migration overheads during multiple migration scheduling. Namely, it indicates how much overheads due to resource dependencies happened. The fewer dependencies in migration requests with less individual overheads, the greater possibility of larger concurrent migration groups during scheduling, which results in a shorter total migration time.

\section{Concurrency-Aware Selection} \label{section: camig}
Solving the MIP model in Equation (\ref{eq: model-objective}) is NP-hard, it is not practical to use MIP solver to get the solution. In this section, we introduce the Concurrency-Aware Migration (CAMIG) selection algorithm for minimizing the resource dependencies and overheads among migrations during dynamic resource management.
Based on the three selection steps of resource management policy, CAMIG has the flexibility to integrate with existing algorithms. Provided that VMs are selected by the policy, CAMIG selects migration destinations to minimize resource dependency. Moreover, if only the management objective and source host selection criteria are given, CAMIG selects both VMs and migration destinations.

The rationale behind CAMIG is to select the migration with the least resource dependency and single migration overhead in each round with the currently selected migrations and minimize the dependency for the future one based on maximal cliques and independent sets of the resource dependency graph. 
Graph theory concepts, such as maximal cliques and independent sets, are explained in Section \ref{sect: cliq-ind}.
There are mainly three steps:
(1) build the migration dependency graph;
(2) get all maximal cliques and independent sets of a migration from the dependency graph; and
(3) calculate the single migration interference and migration concurrency metric (MIGC) of candidate migrations.

\subsection{Migration Dependency Graph Build}
\begin{figure}[t]
	\centering
	\includegraphics[width=0.8\linewidth]{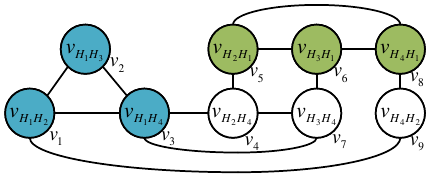}
	\caption{A Resource Dependency Graph with Two of its Maximal Cliques Marked by Color}
	\label{fig: depgraph-example}
\end{figure}
We first explain how to generate the resource dependency graph $G_{dep}$ based on the potential migrating VMs and destinations.
For the undirected graph $G_{dep} =(V, E)$, let $v$ ($v \in V$) be the source-destination pair (src-dst) node or vertex representing one potential migration. Migrations with same src-dst node are categorized in list $M\left( {{v_{sd}}} \right)$. Let $e(v,u) \in E$ be the dependency between two migrations with src-dst node $v$ and $u$. 
As shown in Algorithm \ref{alg: depgraph}, with the input of potential migrating VMs and corresponding destination candidates $H_{i}$, we first add src-dst nodes and classify potential migrations into the corresponding node in $M\left( {{v_{sd}}} \right)$. Then, we add edges into $G_{dep}$ based on the source and destination of each node. 
Fig. \ref{fig: depgraph-example} demonstrates an illustrative example of resource dependency graph based on a given list of potential migrations ($v_1$ to $v_9$) in a specific dynamic resource management which involves 9 src-dst pairs in the same physical network topology shown in Fig. \ref{fig: example} (four hosts connected through one switch).
Each vertex $v_{H_{s}H_{d}}$ indicates the pair of source and destination host for a group of potential migrations. For the sake of conciseness, we use $v_1$ to $v_9$ to represent node $v_{H_{1}H_{2}}$ to $v_{H_{4}H_{2}}$.

\begin{algorithm}[t!]
	\caption{Create $G_{dep}$ and $v_{sd}$ queues}\label{alg: depgraph}
	\KwIn{potential VM $i \in N$, Destinations $\{H_{i}\}$}
	\KwResult{migration depGraph $G_{dep}$, $\{M(v_{sd})\}$}
	\SetKwFunction{depG}{\textsc{Is}Dependent}
	\SetKwFunction{addN}{\textsc{Add}Node}
	\SetKwFunction{addE}{\textsc{Add}Edge}
	\ForEach{$i \in N$}{
		$s \gets p_i$;\\
		\ForEach{$d \in H_{i}$}{
			\addN($G_{dep}$, $v_{sd}$);\\
			$M(v_{sd}) \gets M(v_{sd}) \cup i$;\\
		}
	}
	\ForEach{$v \in V(G_{dep})$}{
		\ForEach{$u \in V(G_{dep})$}{
			\If{$v != u$}{
				\If{\depG($u$,$v$)}{
					\addE($G_{dep}$, ($u,v$));\\
				}
			}
		}
	}
	\KwRet{$G_{dep}$ , $\{M(v_{sd})\}$}
\end{algorithm}

Regardless of the number of potential migrations, the scale of $G_{dep}$ only depends on the source and destination hosts involved. 
Given a list of migrations $M$ = $\{m_0, m_1, ..., m_n\}$, the dependency graph $G(M)$ of $M$ can be constructed as $G(M) = (V, E)$.  As migrations with the same source and destination are always resource-dependent, we categorize migrations into different lists of src-dst pair $v$. Then, all migrations can be represented as $\{M(v_{sd})\} = \{M(v_0),...,M(v_{|V|})\}$. The size of node $|V|$ in the migration dependency graph will be the total combination of source and destination hosts. Through this pre-processing, the total nodes of $G_{dep}$ can be reduced from as many as the potential migrations $|M|$ to the migration pair participated $|V|$. Therefore, the upper-bound of total nodes in graph $G_{dep}(M)$ is $|H_{src}| \cdot |H_{dst}|$. $H_{src}$ and $H_{dst}$ are the number of potential source and destination hosts, respectively.

Note that the dependency graph supports the multiple routing transmission and dynamic migration routing based on the current network status. 
In certain data center networks, multi-path transmission and multiple network interfaces of physical hosts are supported. Thus, the vertex $v_{sd}^P$ in $G_{dep}$ can be extended to indicate the network paths $P_{sd}$ for migrations from the specified network interfaces set $s$ of source host to interfaces set $d$ of destination host. Let $u(P)$ indicate the available bandwidth of network paths $P$. Given two pairs of src-dst interfaces set  $(s_j,d_j)$ and $(s_k, d_k)$ and corresponding network paths $P_j$ and $P_k$, two vertices $v_j$ and $v_k$ are resource-independent, when statement (\ref{eq: independent-statement}) are true and ${s_k} \cap {s_j} = \emptyset$ and ${d_k} \cap {d_j} = \emptyset$:
\begin{equation} \label{eq: independent-statement}
\begin{array}{*{20}{c}}
{u\left( {{P_j}} \right) - u\left( {{P_j} \cap {P_k}} \right) \ge \min \left( {u\left( {{P_j}} \right),NC_s^j,NC_d^j} \right) \wedge}\\
{u\left( {{P_k}} \right) - u\left( {{P_k} \cap {P_j}} \right) \ge \min \left( {u\left( {{P_k}} \right),NC_s^k,NC_d^k} \right)}
\end{array}
\end{equation}
where $(NC_s^j,NC_d^j)$ and $(NC_s^k, NC_d^k)$ indicate the network capacity of interface set and $u\left( {{P_j}} \right)$ and $u\left( {{P_k}} \right)$ indicate the available bandwidth of network paths.
Otherwise, the two vertices are resource-dependent. The upper bound of total nodes in $G_{dep}$ is the total number of $P_{sd}$.

\subsection{Maximal Cliques and Independent Sets} \label{sect: cliq-ind}
\begin{figure}[t]
	\centering
	\includegraphics[width=\linewidth]{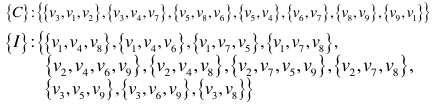}
	\caption{All Maximal Cliques and MISs of $G_{dep}$ in Fig. \ref{fig: depgraph-example}}
	\label{fig: depgraph-example-clique}
\end{figure}

Before discussing how to get maximal cliques and maximal independent sets (MISs) which include a certain node $v$,
we first review some basic concepts, such as clique, independent set, and degeneracy. A clique is a subset of vertices of an undirected graph $G$ such that every two distinct vertices in the subset are adjacent \cite{bron1973algorithm}. The maximal clique is a clique that cannot be extended by including one more adjacent vertex. An independent set of a graph $G$ is the opposite of a clique that no two nodes in the set are adjacent.
Fig. \ref{fig: depgraph-example-clique} shows all maximal cliques and MISs of the $G_{dep}$ (Fig. \ref{fig: depgraph-example}). For example, $\left\{ {{v_3},{v_1},{v_2}} \right\}$ is one of its maximal cliques and $\left\{ {{v_2},{v_7},{v_5},{v_9}} \right\}$ is one of its MISs.
The problems of finding all maximal independent sets and cliques are complementary and NP-hard \cite{bron1973algorithm, lawler1980generating}. Finding all maximal independent sets of a graph is equal to finding all maximal cliques of its complement graph \cite{tomita2006worst}. As a robust metric to indicate graph density or spareness, degeneracy of a graph $G$ is the smallest value $d$ such that every nonempty subgraph of $G$ contains a vertex of degree at most $d$ \cite{lick1970k}.

A clique of $G_{dep}$ is a set of src-dst nodes, where migrations with these nodes can not be scheduled at the same time. In contrast, the migrations from the src-dst nodes within an independent set can be scheduled concurrently.
To check and evaluate the resource dependency or concurrency of each migration with src-dst pair node $v$, we need to generate all maximal cliques $\{ {C^v} \}$ and MISs $\{ {I^v}\}$ of $G_{dep}$ including node $v$ . 
Let $\{C\}$ and $\{I\}$ be all maximal cliques and all maximal independent sets of $G_{dep}$, where $C \in \{C\}$ and $I \in \{I\}$ is one of the maximal cliques and MISs. Let ${C^v} \in \left\{ C \right\}$ and ${I^v} \in \left\{ I \right\}$ denote one of the maximal cliques and independent sets including node $v$. Then, $\left\{ {{C^v}} \right\} \subseteq \left\{ C \right\}$ and $\left\{ {{I^v}} \right\} \subseteq \left\{ I \right\}$. 

\begin{algorithm}[t!]
	\caption{Get All MISs of Node $v$ in $G_{dep}$}\label{alg: clique-indep-alg}	
	\KwIn{$G, \bar{G}, v$, node neighbors in $\bar{G}$ $\{\bar{G}_N(n)\}, n\in V$}
	\SetKwFunction{findC}{\textsc{Cliques}}
	\SetKwFunction{maxD}{\textsc{Get}NodeMaxDegree}
	\KwResult{All MISs $\{I^v\}$ of node $v$, $v \in V({G_{dep}})$}
	\SetKwProg{Fn}{Function}{:}{}
	\Fn{\findC(${\bar G}$, $cand$)}{
		$n \gets$ \maxD(${\bar G}$);\\
		\ForEach{$m \in cand - \bar{G}_N(n)$}{
			$del(m, cand)$;\\
			$I \gets I \cup \{m\}$;\\
			\eIf{${\bar G} \cap \bar{G}_N(m) = \emptyset$}{$\{I^v\} \gets \{I^v\} \cup I$;}{
				\If{$cand \cap adj[m] != \emptyset$}{
					\findC(${\bar G} \cap  \bar{G}_N(m)$, $cand \cap \bar{G}_N(m)$);
				}
			}
			$del(m, I)$;\\
		}
	}
	\textbf{End Function}\\
	${I^v} \gets \emptyset$; $I \gets \{v\}$;\\
	$cand \gets cand - G_N(v) - \{v\}$;\\
	$V(G) \gets V(G) - G_N(v) - \{v\}$;\\
	
	\Return \findC(${\bar G}$, $cand$);
\end{algorithm}

We propose an algorithm for listing $\{C^v\}$ and $\{I^v\}$ based on $\{C\}$ of dependency graph. 
For getting all maximal cliques $\{C\}$ of a graph,
the general-purpose algorithms for listing all maximal cliques \cite{cazals2008note, tomita2006worst} based on Bron-Kerbosch algorithm \cite{bron1973algorithm} take exponential time due to the maximum possible number of cliques. These general-purpose algorithms are not sensitive to the density of a graph. Therefore, parametrized by degeneracy, we use a variant algorithm Bron-Kerbosch Degeneracy \cite{eppstein2010listing} to generate all maximal cliques of the original resource-dependency graph without duplication. All maximal cliques are generated in the tree-like structure by employing the pruning methods with pivoting to allow quick backtrack during the search.  Based on the Bron-Kerbosch algorithm with pivoting, the Bron-Kerbosch Degeneracy uses a degeneracy ordering to order the sequence of recursive calls without pivoting at the outer level of the original Bron-Kerbosch algorithm \cite{eppstein2010listing}. Applied to a $n$-vertex graph with $d$ degeneracy, it lists all maximal cliques in time $O\left( {dn{3^{d/3}}} \right)$.

As shown in the dependency graph property analysis (Appendix A) and the time analysis in the performance evaluation (Section \ref{sect: time-analysis}), 
it is not practical to generate all maximal independent sets $\{I\}$ due to the density of the complement of $G_{dep}$. Thus, we propose a clique-based maximal independent set algorithm to calculate $\{I^v\}$. As shown in Algorithm \ref{alg: clique-indep-alg}, it fist excludes all adjacent nodes of $v$ in the resource dependency graph $G$. Then, it chooses node with maximum degree from each connected candidates of the remaining complement graph $\hat{G}$ recursively in a branch-and-bound method until there is no vertex left. 
Algorithm 2 can achieve the worst-case optimal time complexity of finding all MISs of a node $v$ as $O\left( 3^{m/3}\right)$~\cite{cazals2008note}, where $m = |V(G) - G_N(v)|-1$.

\subsection{Concurrency for Migration Candidates}
In this section, we introduce the migration concurrency metric (MIGC) to indicate the resource dependency level of a potential migration. It is based on the maximal cliques and independent sets of an src-dst pair node. Let $M_{mig}^x$ be the list of migrations have been selected currently. Let $M^x$ be the list of src-dst pair nodes $v_j$ of each migration $m_j \in M_{mig}^x$.
For the first round $x=0$, when the list of selected VM migration is empty, MIGC can be calculated as: 
\begin{equation}\label{eq: migc1}
MIG{C_v} = {{\kappa \cdot\max \left( {\left| {{C^v}} \right|} \right)} \mathord{\left/
		{\vphantom {{\kappa \cdot\max \left( {\left| {{C^v}} \right|} \right)} {\max \left( {\left| {{I^v}} \right|} \right)}}} \right.
		\kern-\nulldelimiterspace} {\max \left( {\left| {{I^v}} \right|} \right)}}
\end{equation}
where ${I^v} \in \left\{ {{I^v}} \right\}$ and ${C^v} \in \left\{ {{C^v}} \right\}$, $\kappa$ is the coefficient for the value normalization.
When $x>0$, the MIGC of migration with src-dst pair node $v$ in $G_{dep}$ can be represented as:
\begin{equation}\label{eq: migc}
MIGC_v^{{M^x}} = MIGCliq_v^{{M^x}} + {1 \mathord{\left/
		{\vphantom {1 {MIGInd_v^{{M^x}}}}} \right.
		\kern-\nulldelimiterspace} {MIGInd_v^{{M^x}}}}
\end{equation}
The migration independent score of the testing node $v$ regarding to the selected migration list can be calculated as:
\begin{equation} \label{eq: migind}
MIGInd_v^{{M^x}} = \frac{{\sum\limits_{{v_j} \in {M^x}} {\sum\limits_{{I^v} \in \{ {I^v}\} } {\left| {{v_j} \cap {I^v}} \right|} } }}{{\left| {\{ {I^v}\} } \right|\cdot\left| {{M^x}} \right|}}
\end{equation}
where ${\sum\limits_{{v_j} \in {M^x}} {\sum\limits_{{I^v} \in \{ {I^v}\} } {\left| {{v_j} \cap {I^v}} \right|} } }$ indicates how many times src-dst nodes $v_j$ of migration from the currently selected list $v_j \in M^x$ is shown in all MISs of the testing node $v$. ${{\left| {\{ {I^v}\} } \right|\cdot\left| {{M^x}} \right|}}$ is the product of the total number of $I^v$ and the number of selected migrations.

Similarly, the migration clique score for src-dst pair node $v$ according to the node list of currently selected migrations $M^x$ is represented as:
\begin{equation} \label{eq: migcliq}
MIGCliq_v^{{M^x}} = \frac{{\sum\limits_{{v_j} \in {M^x}} {\sum\limits_{{I^v} \in \{ {C^v}\} } {\left| {{v_j} \cap {C^v}} \right|} } }}{{\left| {\{ {C^v}\} } \right|\cdot\left| {{M^x}} \right|}}
\end{equation}
where the numerator part indicates how many times the src-dst pair nodes of currently selected migrations is included in the maximal cliques of the node $v$.

The range of the migration clique score and independent set score is $MIGCliq \in [0,1]$ and $MIGInd \in (0,1]$. The largest $MIGCliq$ is 1 when all src-dst pair nodes of selected migrations in $M$ shown in every maximal clique of the testing node. $MIGCliq$ is 0 when there is no pair node included. If there is no src-dst pair from the existing migration list included in the MISs of node $v$, we set the second part of $MIGC$ as  
$\max \left( {{1 \mathord{\left/
			{\vphantom {1 {MIGInd}}} \right.
			\kern-\nulldelimiterspace} {MIGInd}}} \right) + 1 $ with current minimum $MIGInd$ value.
Thus, the smaller $MIGC$ of a potential migration, the fewer migration dependencies for the selected migration lists and future selections. Note that we do not need to check $MIGC$ of two migrations with the same node, as the result will be the same.

\subsection{Concurrency-Aware Migration Selector}

In this section, we explain the details of the proposed concurrency-aware migration selector (CAMIG) in Algorithm \ref{alg: camig}. It minimizes resource dependency and migration overheads while achieving the objective of resource management. Given the input of the objective of the dynamic resource management, the objective function, available VMs, candidates source and destination hosts, the networking information monitored by the SDN controller, and the VM and host information, CAMIG will generate the live migration list which consists of the selected VMs and the corresponding destinations.

\begin{algorithm}[t!]
	\caption{CAMIG}\label{alg: camig}
	\KwIn{Performance Objective $Score^*$, protential VMs ${i}$, source $H_s$, dst $H_d$}
	\KwResult{Selected Migration List $M_{mig}$}
	
	\SetKwFunction{cliqG}{\textsc{all}Cliques}
	\SetKwFunction{indG}{\textsc{all}IndepSet}
	\SetKwFunction{depG}{\textsc{Create}depGraph}
	\SetKwFunction{bestVm}{\textsc{Get}MigCandidates}
	\SetKwFunction{MigInter}{\textsc{Update}MigInterference}
	\SetKwFunction{depGUpdate}{\textsc{Update}depGraph}
	Step 1. get node clique matrix\\
	$G_{dep}$, $\{M(v_{sd})\} \gets $ \depG(${H_s},{H_d},k$);\\
	$\{C\} \gets$ \cliqG(${G_{dep}}$);\\
	$x \gets 0$;~
	$M^x \gets \emptyset$;~
	$M^x_{mig} \gets \emptyset$;\\
	\Do{$|\hat {Score}^{x+1} - Scor{e^*}| > \delta$ and $\hat {Score}^{x+1} > \hat {Score}^{x}$ and $x + 1 < |\{m\}|$}{
		Step 2. get candidate VMs\\
		\MigInter(VM$_i$, $H_d^i$, $L_{sd}^i$);\\ 
		$\hat {Score}^{x+1}$, $\{v_{sd}^{j}\}$, $\{m^j_{sd}\}$ $\gets$ \bestVm($p_{current}$, $\{w_i\}$, $\{H_d^i\}$, $Score^x$, $M^x_{mig}$);\\
		
		Step 3. select the optimal migration\\
		$\hat {v}_{sd}^j \gets v_{sd}^0$;~ $\hat m^j \gets m^0_{sd}$;\\
		\If{$|\{v_{sd}^{j}\}| > 1$}{
			\ForEach{$v \in \{v_{sd}\}$}{
				${C^v}$ = \cliqG($\{C\}$, $v$);\\
				${I^v}$ = \indG($G_{dep}$, $\{C\}$, $v$);\\
				\If{$Inte{r^{j,v}} < Inte{r_{\min }}$}{
					$Inte{r_{\min }} \leftarrow Inte{r^{j,v}}$;\\
					$\hat {v}_{sd}^j \gets v_{sd}^j$;~ $\hat m^j \gets m^j_{sd}$;\\	
				}
			}
		}
		${M^{x + 1}} \leftarrow {M^x} \cup \hat v_{sd}^j$; ~ $M^{x+1}_{mig} \gets M^{x}_{mig} \cup \hat m^j_{sd}$;\\
		\depGUpdate($G_{dep}$, $\{C\}$, $\hat m^j_{sd}$, $\hat v_{sd}^j$)\\
	}
	
	\KwRet{$M_{mig}$}
\end{algorithm}

In \textbf{Step 1}, $G_{dep}$ and $M\left( {{v_{sd}}} \right)$ are generated according to Algorithm \ref{alg: depgraph}.
In \textbf{line 3}, we find all maximal cliques $\{C\}$ of $G_{dep}$.
From \textbf{line 5-18}, at each round $x$, we select the optimal migration from src-dst node $\hat v_{sd}^j$ based on both $MIGC$ and single migration overhead $Inte{r_{single}}$.
As a result, it gets the overall minimal dependencies and single overheads of the total migrations to satisfy the objective of the dynamic resource management.
For \textbf{Step 2}, in each optimal round, it first updates the single migration interference of each candidate VM for its potential destinations.
According to the selected migrations of previous rounds $M_{mig}^x$ and current placement, it gets the newest VM to Host mapping.
Then, it obtains the candidate migrations $\{m_{sd}^j\}$ and corresponding pairs ${v_{sd}^j}$ in this round with the same objective score $\hat{score}^{x+1}$.
It can generate more potential migrations by enlarging the score tolerance of the optimal objective in each round.
For \textbf{Step 3}, the optimal migration with the minimum total migration interference $Inte{r_{\min }}$ is selected. 
It first calculates $\{C^v\}$ based on all maximal cliques $\{C\}$ generated based on Bron-Kerbosch Degeneracy algorithm and $\{I^v\}$ according to Algorithm \ref{alg: clique-indep-alg}.
Then, based on the pair list of already selected migrations $M^x$, the migration overhead of migration $m_i$ with src-dst pair $v$ can be calculated as:
\begin{equation}
	Inte{r^{i,v}} = \kappa_{mig} \cdot Inter_{single}^{i,v} + \kappa_{mig} \cdot Inter_{single}^{i,v} \cdot MIGC_v^{{M^x}}
\end{equation}
where $\kappa_{mig}$ is the coefficient for the value normalization of single migration overheads. Then, the single migration overhead $Inter_{single}^{i,v}$ and $MIGC_v^{{M^x}}$ can be calculated based on Equation (\ref{eq:total-com-mem})-(\ref{eq:round-number}) and (\ref{eq: migc1})-(\ref{eq: migind}), respectively.
In \textbf{line 17}, it adds the optimal migration of this round $\hat m^j_{sd}$ and its pair node $\hat v_{sd}^j$ to the currently selected migration list $M^x_{mig}$ and corresponding node list $M^x$.

In \textbf{line 19}, algorithm \textbf{UpdatedepGraph}  updates the dependency graph and all maximal cliques according to the selected migration.
Certain potential migrations related to the selected optimal migration are deleted from the the pair list. For example, in Section \ref{sect: example}, if we choose migration $v_1^1: H1 \to H3$, then $v_1^2: H3 \to H1$ is excluded for future selection. 
Note that we do not need to use Bron-Kerbosch Degeneracy to recalculate $\{C\}$ based on the new subgraph (Theorem \ref{thm: 2}). 
If the pair list is empty after update $M_{sd} = \emptyset$, the corresponding node $v_{sd}$ will be removed from $G_{dep}$ and $\{C\}$. If the updated clique size is 1 and the only one vertex left has connected edge, remove such clique. Duplicated cliques are also removed. 

The stop conditions of CAMIG are: (1) at the round $x$, the currently selected VM migrations achieve the objective of dynamic resource management; (2) the objective is not improved in the last round; (3) round number equals to the total number of potential VMs. 

\newtheorem{thm}{Theorem}
\begin{thm}[Correctness of UpdatedepGraph] \label{thm: 2}
	Given a graph $G=(V,E)$, $V \ne \emptyset$, its all maximal cliques $\{C\}$ and its subgraph $G^{'} = G[V\backslash \{v^{'}\}]$ with removing vertices $\{v^{'}\}$, results of UpdatedepGraph algorithm $\{C^{''}\}$ and listing all maximal cliques $\{C^{'}\}$ of $G^{'}$ are the same.
\end{thm}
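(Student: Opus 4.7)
The plan is to prove the set equality $\{C^{''}\} = \{C^{'}\}$ by verifying both inclusions between the output of \textbf{UpdatedepGraph} and the family of maximal cliques of the induced subgraph $G^{'}$. Without loss of generality I would first treat the single-vertex case $\{v^{'}\} = \{v\}$ and then lift to arbitrary removal sets by induction, since the operation $C \mapsto C \cap V(G^{'})$ commutes with iterated restrictions and with the pruning rules. Throughout, I would use the observation that \textbf{UpdatedepGraph} constructs its candidate family from $\{C\}$ by taking each $C' = C \cap V(G^{'})$, then discarding (i) empty sets, (ii) singletons whose sole vertex still has an incident edge in $G^{'}$, and (iii) duplicates and proper subsets of other surviving candidates.

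For the reverse inclusion $\{C^{'}\} \subseteq \{C^{''}\}$, let $K$ be a maximal clique of $G^{'}$. Then $K$ is a clique in $G$, so it extends to some maximal clique $C \in \{C\}$. Any extension vertex $u \in C \setminus K$ must fail to lie in $V(G^{'})$, for otherwise $u$ would be adjacent in $G^{'}$ to every vertex of $K$, contradicting maximality of $K$. Hence $C \setminus K \subseteq \{v\}$, so $C \cap V(G^{'}) = K$. The candidate $K$ is therefore produced by the algorithm, and since $K$ is a maximal clique of $G^{'}$, it cannot be a subset of another surviving candidate nor can it satisfy the singleton-with-edge rule, so it survives pruning and lies in $\{C^{''}\}$.

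For the forward inclusion $\{C^{''}\} \subseteq \{C^{'}\}$, take $C'' \in \{C^{''}\}$, arising from some $C \in \{C\}$ with $C'' = C \cap V(G^{'})$. Clearly $C''$ is a clique in $G^{'}$ by preservation under induced subgraphs; the work is in showing maximality. Suppose for contradiction that $C'' \cup \{u\}$ is a clique in $G^{'}$ for some $u \notin C''$. Since $u$ is adjacent in $G$ to every vertex of $C''$, maximality of $C$ forces the existence of $w \in C$ non-adjacent to $u$; this $w$ cannot belong to $C''$, so $w = v$ and $v \in C$. Thus $C'' \cup \{u\}$ is a clique of $G$ avoiding $v$, extendable to some maximal clique $C_2 \in \{C\}$ with $v \notin C_2$, whence $C_2 \cap V(G^{'}) = C_2 \supsetneq C''$ is also a surviving candidate. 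The proper-subset/deduplication pruning step then would have removed $C''$, contradicting $C'' \in \{C^{''}\}$.

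The main obstacle is justifying that the algorithm's concrete pruning rules are exactly strong enough to eliminate every non-maximal reduced clique and no more. The forward direction hinges on proving the equivalence "$C \cap V(G^{'})$ is not maximal in $G^{'}$ if and only if it is either an empty set, a singleton with an incident edge in $G^{'}$, or a proper subset of some other $C_2 \cap V(G^{'})$ with $C_2 \in \{C\}$." I would isolate this as a lemma, since the nontrivial case — a clique of size at least two that becomes absorbed by a different maximal clique of $G$ not containing $v$ — is precisely the scenario my forward-direction argument constructs, and showing the converse direction of that lemma (absorption implies detectability by the pruning rules) closes the loop and completes the proof.
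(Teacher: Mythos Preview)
Your double-inclusion argument is sound and considerably cleaner than the paper's own proof. The paper splits into the cases ``all cliques are singletons'' versus ``all cliques have size $>1$'' (leaving the mixed case unaddressed), then handles $\{C'\}\subseteq\{C''\}$ by a contradiction argument that is essentially your reverse-inclusion step in disguise: extend a maximal clique of $G'$ to one of $G$ and observe that the extension vertices must all lie in $\{v'\}$. For the direction $\{C''\}\subseteq\{C'\}$ the paper writes only ``Similarly, we can prove $\forall C'' \in \{C'\}$,'' whereas you correctly recognize this as the substantive direction that actually requires the pruning rules.

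One caution: the paper's description of \textsc{UpdatedepGraph} states only that singleton cliques whose vertex still has an incident edge are removed and that \emph{duplicates} are removed; it never says \emph{proper subsets} are removed. Your forward-direction argument relies on proper-subset elimination, and without it the claim fails for general graphs (take two triangles sharing an edge and delete a non-shared vertex: one reduced clique is a proper, non-duplicate subset of the other and survives the stated pruning). So the lemma you isolate is exactly the right crux, but be aware that you are proving correctness of a slightly stronger algorithm than the one the paper literally specifies --- which in effect repairs the gap that the paper's ``similarly'' leaves open.
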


\begin{proof}
	Bron-Kerbosch Degeneracy generates all and only maximal cliques $\{C\}$ of $G$ \cite{eppstein2010listing}. 	
	(1) For $\forall {C^{'}}$, $\forall {C^{''}}$, $ |{C^{'}}| = 1$ and $|{C^{''}}| = 1$. 
	Because the $V(G)\backslash \{v^{'}\} = V(G^{'})$.
	Thus, $\{C^{'}\} = \{C^{''}\}$.
	(2) For $\forall {C^{'}}$, $\forall {C^{''}}$, $|{C^{'}}| > 1$ and $|{C^{''}}| > 1$. For the sake of prove, we assume that 
	$\exists{C^{'}}, {C^{'}} \notin \{C^{''}\}$. Then, $\exists {C_e}, \exists {C_e^{'}}$, $ {C_e} = C_e^{'} \cup \{ {v_e}\}  \cup \{ v_e^{'}\} $, 
	where ${C_e} \in \{ C\} $, $C_e^{'} \in \{ {C^{'}}\} $,  part of remaining vertices $\{ {v_e}\}  \subseteq V(G)\backslash \{ {v^{'}}\} $, part of removing vertices $\{ v_e^{'}\}  \subseteq \{ {v^{'}}\} $. 
	Then, we have $C_e^{'} \cup \{ {v_e}\} \in \{C^{''}\}$.
	If $\{ {v_e}\}  \ne \emptyset $, 
	because $\forall C^{'}, \exists C, C^{'} \subseteq C$, then $C_e^{'} \cup \{ {v_e}\} \in \{C^{'}\}$. We have a contradiction, as $C_e^{'}$ is a maximal clique of $G^{'}$. If $\{ {v_e}\}  = \emptyset$ or $C_e = C_e^{'}$, as the UpdatedepGraph removes all $v^{'} \in \{v^{'}\}$, we have a contradiction $C_e^{'} \in \{C^{''}\}$.
	Thus, $\forall C^{'} \in \{C^{''}\}$. Similarly, we can prove $\forall C^{''} \in \{C^{'}\}$. Therefore, $\{C^{'}\} = \{C^{''}\}$.
\end{proof}

The worst-case running time of Bron-Kerbosch Degeneracy is $O\left( {dn{3^{d/3}}} \right)$ \cite{eppstein2010listing} with total $n$ vertices and degeneracy $d$. The upper bound of all maximal cliques/independent sets of a Graph $G$ is $\left( {n - d} \right){3^{d/3}}$. Thus, given $c$ maximal cliques, the time complexity of the algorithm for calculating MIGC is $O(cn)$. Then, the worst-case running time of CAMIG is $O\left( {\left( {n - d} \right){n^2}{3^{d/3}}} \right)$.
We perform extensive computational evaluation on time complexity in Section \ref{sect: time-analysis}. It demonstrates that algorithm CAMIG is very fast in practice.

\section{Performance Evaluation} \label{section: evaluation}
In this section, we evaluate the performance of our proposed concurrency-aware migration selection (CAMIG) algorithm for dynamic resource management with several parameters, such as total migration time, total migration number, and corresponding dynamic resource management performance in load balancing and energy-saving scenarios. We used both real-world workload trace from PlanetLab \cite{park2006comon} and synthetic workloads for the evaluation. We also performed extensive computational experiments for time analysis. The results show that the proposed algorithm can significantly improve the multiple migration performance~\cite{he2021sla} while achieving the target of resource management.

The scalability of Mininet is limited due to the limitation of its resource usage and the operating systems, which prevents the cloud-scale simulations. Furthermore, it can not simulate the computing resource for the dynamic resource management and multiple migration scheduling.
Thus, we have implemented components for the multiple migration scheduling simulations~\cite{he2021sla} based on the CloudSimSDN~\cite{son2019cloudsimsdn}. The accuracy of network processing of CloudSimSDN compared to Mininet is validated in \cite{son2015cloudsimsdn}. Based on the phases of pre-copy migration, the event-driven simulator\footnote{CloudSimMig. \url{https://github.com/hetianzhang/CloudSimMig}} can evaluate the performance of multiple migrations in terms of the total migration time, migration execution time, total transferred data, and downtime. 

\subsection{Load Balancing Scenario}
In this section, we evaluate the impact of migration concurrency during the dynamic resource management on the performance of multiple migration scheduling in load balancing scenarios. The target of the resource management policy in this experiment is to keep the total CPU utilization of each physical host to 50\%. For other solutions besides the optimal, we set the target range of the total CPU utilization from 45\% to 55\%. We compare our algorithm CAMIG with the result of the optimal and other load-balancing algorithms: Sandpiper \cite{wood2009sandpiper}, FFD \cite{verma2008pmapper}, and  iAware \cite{xu2014}. 
We first evaluate algorithms on small-scale experiments with 8 physical hosts in a Fat Tree. Then, we extend the experimental scale for complex scenarios with more resource dependencies. In extensive experiments, by integrating the proposed concurrency-aware algorithm with existing dynamic resource management algorithms, we directly evaluate and illustrate the scheduling performance improvement in multiple migration planning and scheduling algorithm~\cite{he2021sla}.

\begin{figure}[t]
	\centering
	\includegraphics[width=0.8\linewidth]{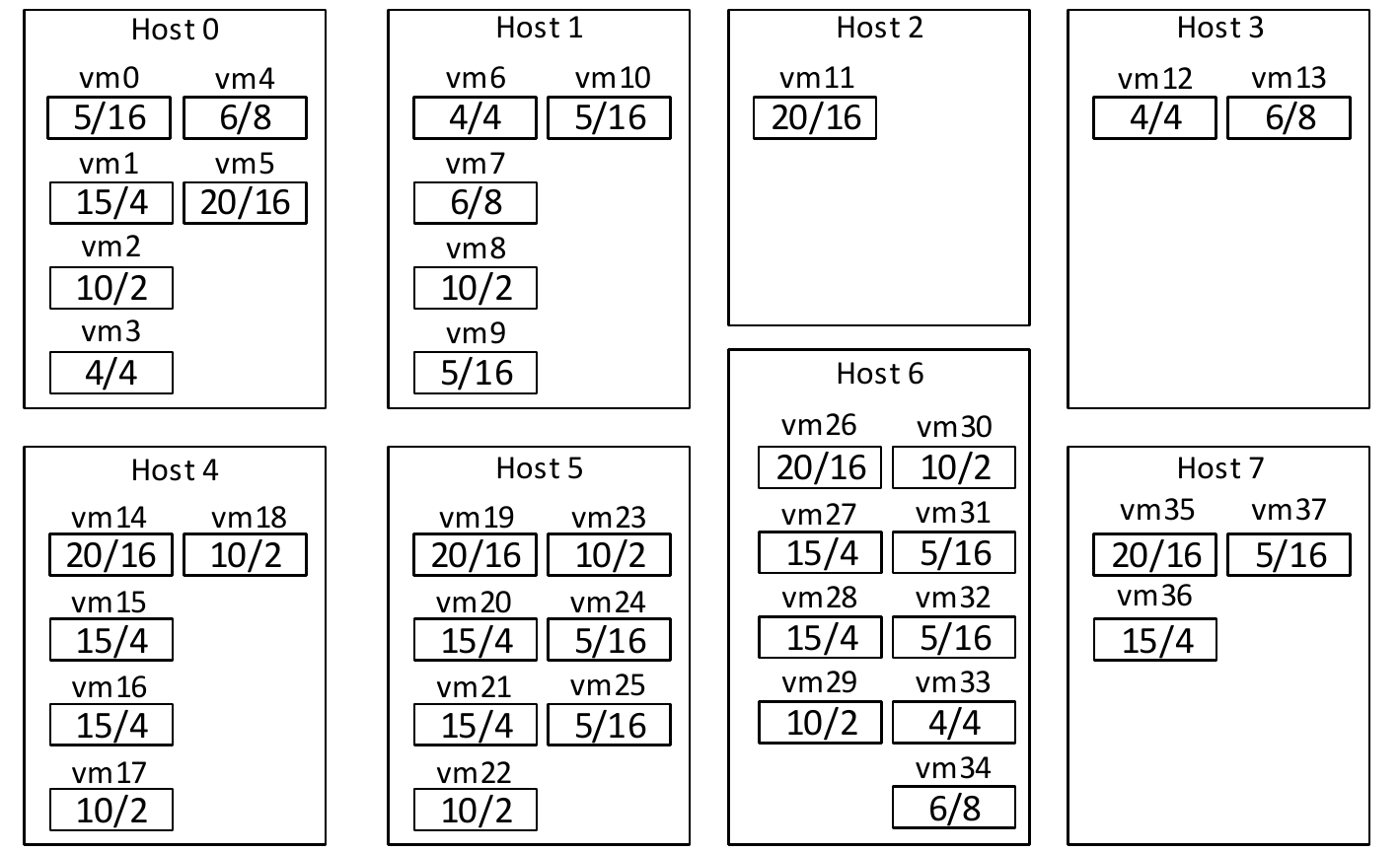}
	\caption{Initial mapping for 8 different physical hosts with CPU utilization(\%)/Requested Memory(GB) }
	\label{fig: small-scall-mapping}
\end{figure}

\subsubsection{Experimental Setup}
In order to focus on the performance of multiple migrations for different migration requests generated by various resource management algorithms, we controlled variables of single migration overheads, such as dirty page rate, that other comparison algorithms ignore. In the load-balancing scenario, we use the same source selection as Sandpiper to choose over-utilized source hosts for potential migration.

The actual location of physical hosts in Fat Tree topology with different resource utilization is generated randomly, which causes different source and destination selections and resource dependencies in each random setup. Without specific explanation, the result is the average value of 10 experiments.
Causing utilization difference among hosts, the initial placement of VMs in each machine with different CPU utilization and memory size is shown in Fig. \ref{fig: small-scall-mapping}. To differentiate the migration value in management objective and migration schedule, we create VMs with different combinations of high, medium, and low value of resource utilization and memory size.
The CPU utilization of each VM is from 4\% to 20\% of the total host CPU resource. As a result, the CPU utilization of each host is from 10\% to 90\%. The Memory size of each VM is from 2 GB to 16 GB, which can result in various migration overheads. 

Other parameters of pre-copy migration are set as the same for each VM.
The dirty page rate factor is 0.001 per second. For example, with a 0.001 per second dirty page rate factor, the dirty page rate of a VM with 16 GB memory is 128 Mbps. The data compression ratio is 0.8. The iteration and downtime threshold is 30 and 0.5 seconds, respectively.We create a k-8 FatTree Data Center Network (128 hosts) with 1 Gbps bandwidth between switches. For the purpose of irrelevant parameter exclusion in experiments, each physical host has 16 CPUs with 10000 MIPS each, 10GB RAM, 1 TB storage, and 1 Gbps network interface. Note that hosts are not required to be identical in the proposed algorithm. 

Dual simplex (Gurobi optimizer 9.0\footnote{Gurobi solver, \url{https://www.gurobi.com/}} and Python-MIP 1.6.7\footnote{Python-MIP. \url{https://github.com/coin-or/python-mip}}) were used to get the optimal solution of the MIP model. We also proposed a baseline algorithm called HostHits (hht).  As shown in CAMIG selections, several potential destinations can achieve the same objective of dynamic resource management. It chooses the least selected/hit host as the destination of VM migration in each migration selection iteration.

For original Sandpiper, FFD and iAware without multiple migration scheduling, the sum of migration execution time is the actual total migration time of these algorithms, because they only consider one-by-one migration scheduling. However, given the multiple migration requests, we apply the multiple migration planning and scheduling algorithm~\cite{he2021sla} to all resource management algorithms in experiments and evaluate and show the results of corresponding performance in multiple migration scheduling.

\begin{table*}[ht] 
	
	\centering
	\caption{Total migration time/sum of migration execution time comparison in the extending mapping scenarios}
	\resizebox{0.8\linewidth}{!}{
		\begin{tabular}{|l|l|l|l|l|}
			\hline
			approach		&	multi1				&	multi2				&	multi3					&	multi4  				\\
			\hline
			optimal 		&	71.5313 / 172.9520	&	71.5313 / 345.9040	&	71.5313 / 518.8560		&	71.5313 / 691.8080 	    \\
			camig			&	86.5060 / 189.5725	&	86.5060 / 379.1451	&	86.5060 / 568.7177		&	86.5060 / 758.2903		\\
			sandpiper		&	86.5060 / 189.5725	&	86.5060 / 379.1451	&	99.4928 / 594.7547		&	99.4860 / 784.4188 		\\
			optimal+sandpiper &	86.5329 / 189.6183	&	86.5329 / 379.2367	&	86.5094 / 568.8412		&	86.5329 / 758.4734	    \\
			ffd				&	73.2070 / 133.0450	&	88.1817 / 266.1101	&	73.2203 / 399.2128		&	88.1949 / 532.3334		\\
			iaware			&	86.5158 / 174.6271	&	578.5142 / 969.6401	&	374.0354 / 1448.9137	&	419.1750 / 1941.2873		\\
			\hline
	\end{tabular}}
	\label{tb: comparison-time}
\end{table*}

\begin{table*}[ht] 	
	\centering
	\caption{Comparison of dependent migrations/multiple migration interference/standard deviation of CPU utilization}
	\resizebox{0.8\linewidth}{!}{
		\begin{tabular}{|l|l|l|l|l|}
			\hline
			approach	&	multi1						&	multi2						&	multi3						&	multi4  				\\
			\hline
			optimal 	&	5/ 3.1648/ 0				&	10/8.9682/ 0				&	15/ 10.2091/ 0				&	20/ 14.3697/ 0 	    \\
			camig		&	10/ 6.2048/ 7.4286			&	20/13.0928/ 6.9333			&	30/ 31.2534/ 6.7826			&	40/ 36.4625/ 6.7097			\\
			sandpiper	&	10/ 6.2048/ 7.1428			&	34/ 22.9404/ 6.6667			&	55/ 58.0650/ 6.6087			&	76/ 70.0414/ 6.5161 		\\
			optimal+sandpiper 	&	10/ 6.8879/ 14.2857 		&	20/ 13.9321/ 10				&	30/ 21.4943/ 8.7826				&	40/ 32.6992/ 9.7419 	    \\
			ffd			&	11/ 6.3697/ 84.5714		&	21/ 19.2937/ 78.9333		&	33/ 23.1770/ 77.2173		&	54/ 45.3416/ 76.3870		\\
			iaware		&	15/ 9.0528/ 35.7142		&	53/ 49.4754/ 210.8			&	48/ 38.6271/ 235.9130			&	79/ 68.3587/ 248.25801	\\
			\hline
	\end{tabular}}
	
	\label{tb: comparison-interference}
\end{table*}

The rationale is that Sandpiper chooses the largest volume/memory VM from one of the most overloaded physical host to minimize live migration overheads. The volume as the multi-dimensional loads indicator is defined as: $Volume = \frac{1}{{(1 - cpu)(1 - net)(1 - mem)}}$ \cite{wood2009sandpiper}, where \textit{cpu}, \textit{net}, and \textit{memory} are normalized utilizations of corresponding resources.
FFD (First-Fit Decreasing) algorithm selects the smallest size VMs from over-utilized hosts and assigns them in the FFD ordering of the spare resources to under-utilized hosts. 
iAware considers both co-location VM interference and the single live migration overheads. The co-location VM interference is linear to the number of VMs one physical machine hosts in Xen. The migration selection in iAware is sequentially decided in each round of the greedy algorithm.

\subsubsection{Scalability Evaluation}
We extend the scale of experiments (multi2, multi3, and multi4) by multiplying the same mapping 2, 3, and 4 times. 
Total of $N$ hosts are randomly placed among the first $N$ number locations in the Fat Tree topology with 128 hosts. Each scenario has 16, 24, 32 candidate destination hosts with a total 76, 114, and 152 potential migration VMs, respectively. For example, the physical Host 16, Host 8 and Host 0 have the same VM initial allocation. However, for each scenario, the placement of each physical host in the FatTree is generated randomly. As the resource management algorithms do not have the prior knowledge of the initial placement, the combination of source, destination, and instances during migration selection is increased exponentially. As a result, with the experiment scale increasing, more random source and destination combinations of potential migrations are generated for each experiment. We conducted 10 experiments in each scenario and show the average results. 
 
Table \ref{tb: comparison-time} and \ref{tb: comparison-interference} show the results of the optimal solution, CAMIG and the optimal solution with Sandpiper VM selection, Sandpiper, FFD, and iAware in total migration time with multiple migration schedule, total migration execution time (one-by-one schedule), the number of dependent migration tasks, multiple migration interference value, and the load-balancing performance (standard deviation of CPU utilization). The multiple migration interference value is the sum of normalized single overheads from dependent migrations. Although all physical hosts are arranged randomly, the optimal result should be the same as in scenario multi1.

\textbf{Analysis:}
Table \ref{tb: comparison-time} and \ref{tb: comparison-interference} show that the MIP model achieves the optimal in all scenarios. With the source host selection from Sandpiper, comparing CAMIG with the optimal solution, as the problem scale increases, CAMIG can maintain the optimal performance in multiple migration scheduling as well as the number of resource-dependent migrations. In multi3 and multi4, CAMIG  over-satisfies the requirement of load-balancing by losing the value of multiple migration interference. For the Sandpiper and iAware, as the the scale of the problem increases, the number of dependent migrations and the value of multiple migration interference increase dramatically, which leads to a larger total migration time in both multiple and one-by-one scheduling. FFD can not satisfy the requirement of load-balancing in the system.

The total migration time of Sandpiper is increased by 15.01\% in multi3 and multi4. In Table \ref{tb: comparison-interference}, although FFD has the lowest total migration time and migration execution time, it cannot achieve the ideal load-balancing performance. The standard deviation of FFD is the largest among other algorithms. Moreover, the largest total migration is increased by 21.33\% compared to the lowest. For iAware, the actual total migration time equals to the total migration execution time by only allowing one-by-one scheduling. With multiple migration scheduling, iAware has the worst performance in total migration time and load-balancing due to the trade-off between migration execution time and co-location interference. The total migration time varies largely in different scenarios, increasing at most 568.68\%.

\begin{figure*}[th]
	\centering
	\begin{subfigure}{.33\linewidth}
		\centering
		\includegraphics[width=\linewidth]{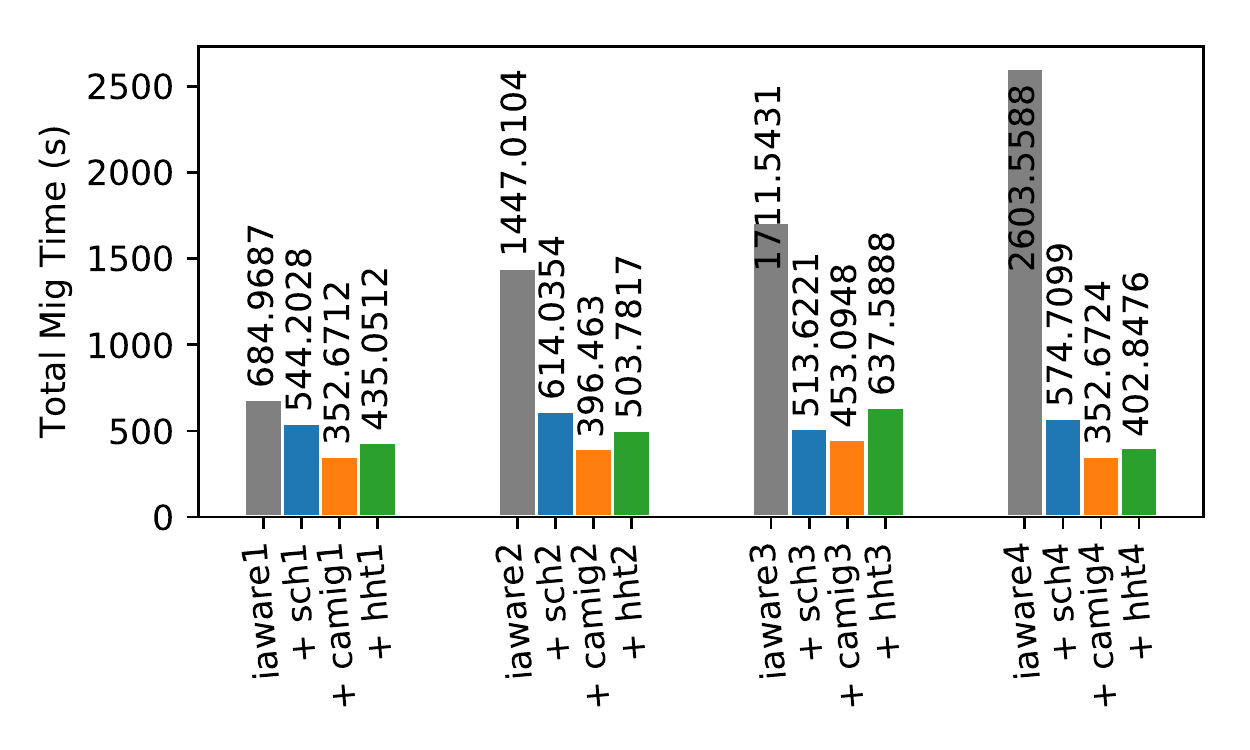}
		\caption{iAware}
		\label{fig: iaware}
	\end{subfigure}%
	\begin{subfigure}{.33\linewidth}
		\centering
		\includegraphics[width=\linewidth]{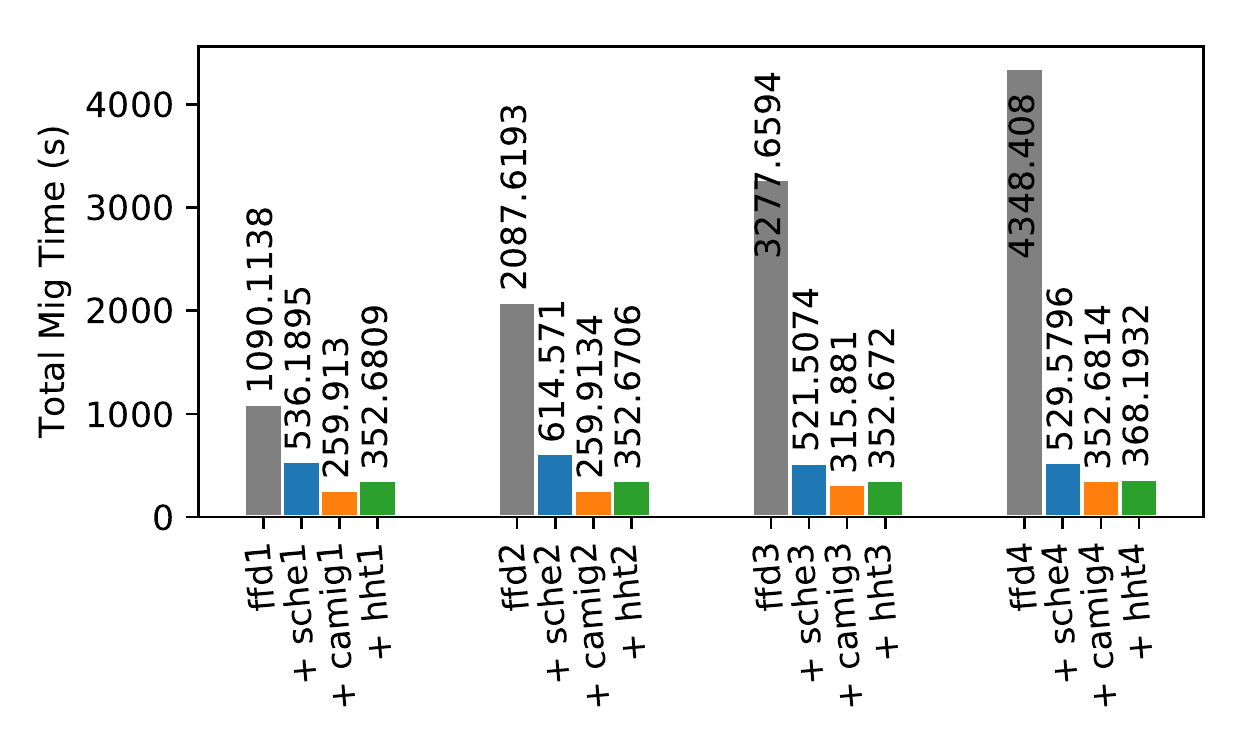}
		\caption{FFD}
		\label{fig: ffd}
	\end{subfigure}
	\begin{subfigure}{.33\linewidth}
		\centering
		\includegraphics[width=\linewidth]{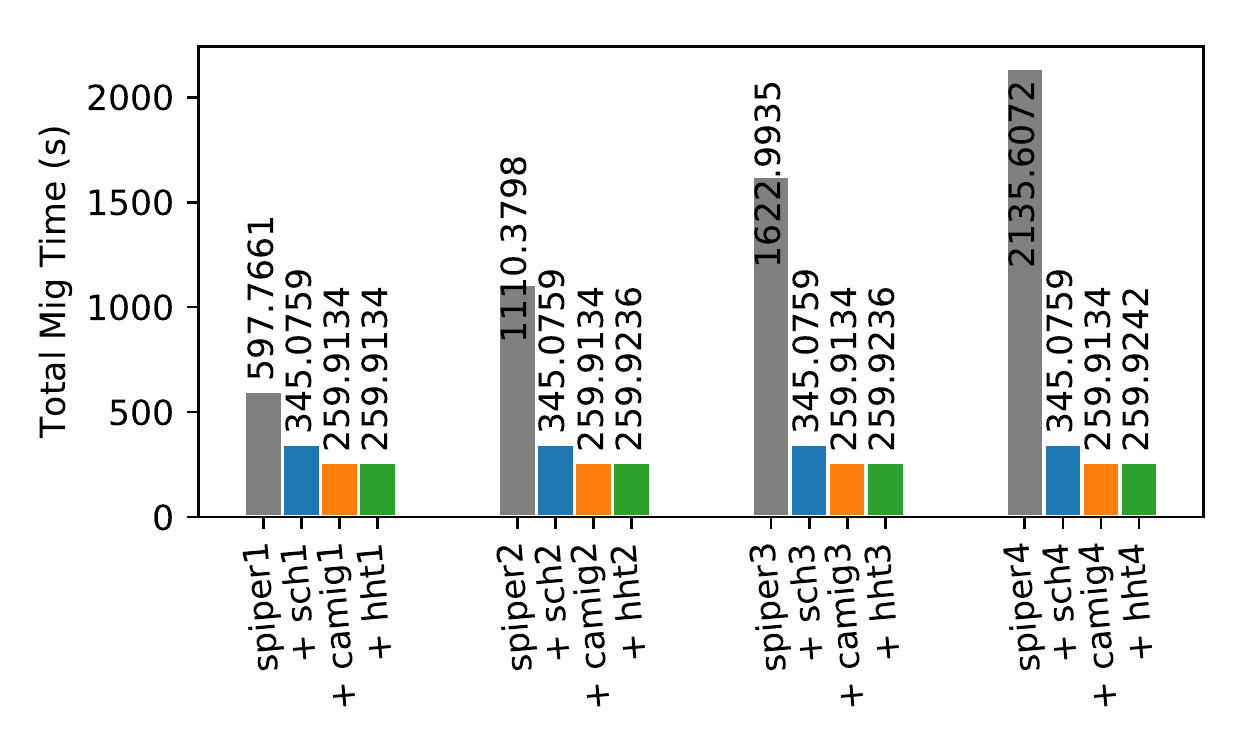}
		\caption{Sandpiper}
		\label{fig: sandpiper}
	\end{subfigure}
	\caption{Performance Comparison with one-by-one, direct multiple scheduling, CAMIG and HostHits}
	\label{fig: camig-compare}
\end{figure*}

\subsubsection{Extensive Evaluation}
As every load-balancing policy has its own logic for VM selection, it is difficult to evaluate the improvement of multiple migration directly. Thus, in this section, we extended the experiments by integrating the HostHits and CAMIG algorithm with the existing policies: iAware, FFD, and Sandpiper. With the benefit of flexibility, CAMIG can be adapted to other existing dynamic resource management algorithms.
We randomly generated VM Memory Size from 8 to 14 GB with the same scenarios (Fig. \ref{fig: small-scall-mapping}). Each result is the average value of 10 experiments in each scenario. Fig. \ref{fig: camig-compare} illustrates the multiple migration performance in total migration time based on the migration requests of these policies with one-by-one scheduling and multiple migration scheduling (+sch), and multiple migration scheduling performance based on the migration requests of CAMIG (+camig) and HostHits (+hht) in 4 different scenarios.

\textbf{Analysis:}
Fig. \ref{fig: iaware} indicates that iAware with CAMIG can achieve the best performance with multiple migration scheduler in all 4 scenarios. The performance is increased by 20.55\%, 57.57\%, 70.02\%, and 77.93\% when migration requests scheduled by the multiple migration scheduler, respectively. However, with CAMIG the performance is increased by 48.54\%, 72.63\%, 73.52\%, and 86.48\% compared to the original iAware and increased by 35.29\%, 35.50\%, 11.89\%, and 38.68\% compared to the performance of iAware with only multiple migration scheduler. Moreover, although iAware with HostHits generally has a better performance compared to iAware+scheduler, as shown in scenario multi3, it results in a worse total migration time due to creating a larger clique of the dependency graph.
For FFD, CAMIG can increase the performance up to 91.90\%, 57.82\%, and 26.42\% compared to FFD with one-by-one scheduler, multiple migration scheduler, and HostHits (Fig. \ref{fig: ffd}). Moreover, Fig. \ref{fig: sandpiper} shows that the performance of Sandpiper with CAMIG in total migration time is increased by up to 87.87\% and 24.68\% than Sandpiper with one-by-one scheduler and multiple migration scheduler, respectively.

\subsubsection{Summary}
In summary, CAMIG can efficiently improve the multiple migration performance while achieving the target of load-balancing resource management. The performance of comparing load-balancing policies can be increased by up to 91.90\%, 57.82\%, and 28.89\% as compared to the one-by-one scheduler, the multiple migration scheduler, and HostHits, respectively. CAMIG outperforms the original policy and the HostHits. The round-robin algorithm HostHits cannot guarantee the multiple migration performance though it generally can decrease the total migration time.

\subsection{Processing Time Analysis} \label{sect: time-analysis}
\begin{figure*}[t]
	\centering
	\begin{minipage}{.3\linewidth}
		\centering
		\includegraphics[width=\linewidth]{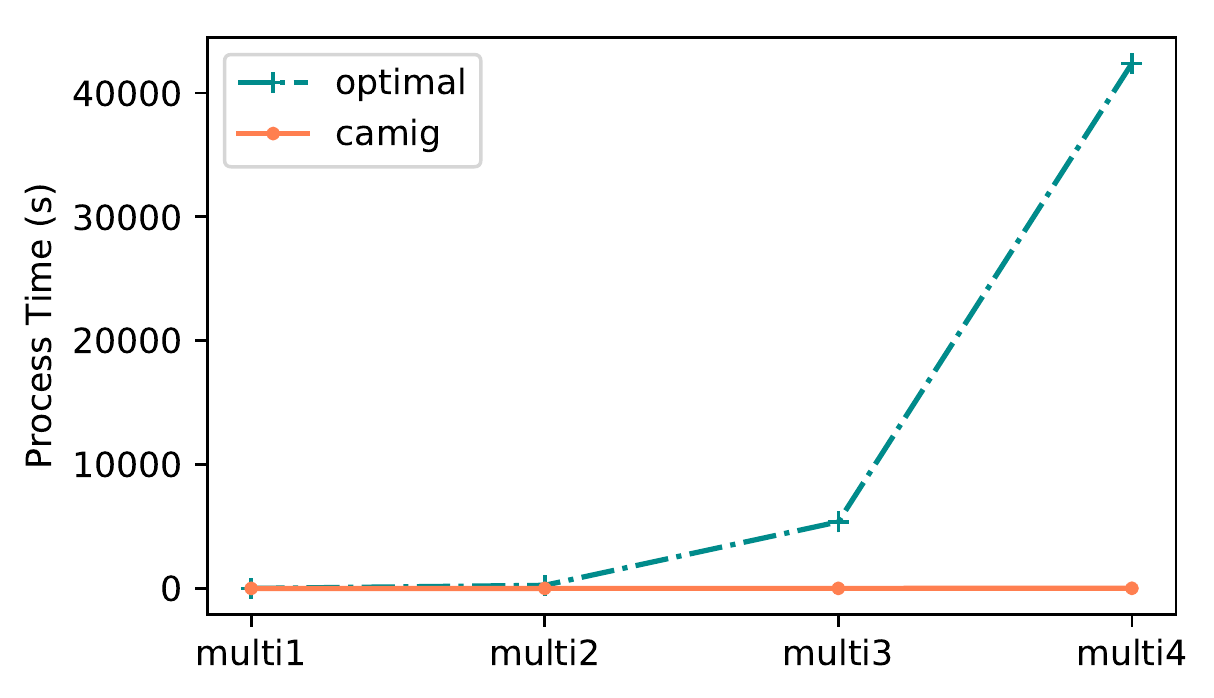}
		\caption{Runtime comparison between optimal and CAMIG}
		\label{fig: runtime-optimal}
	\end{minipage}%
	\hfill
	\begin{minipage}{.3\linewidth}
		\centering
		\includegraphics[width=\linewidth]{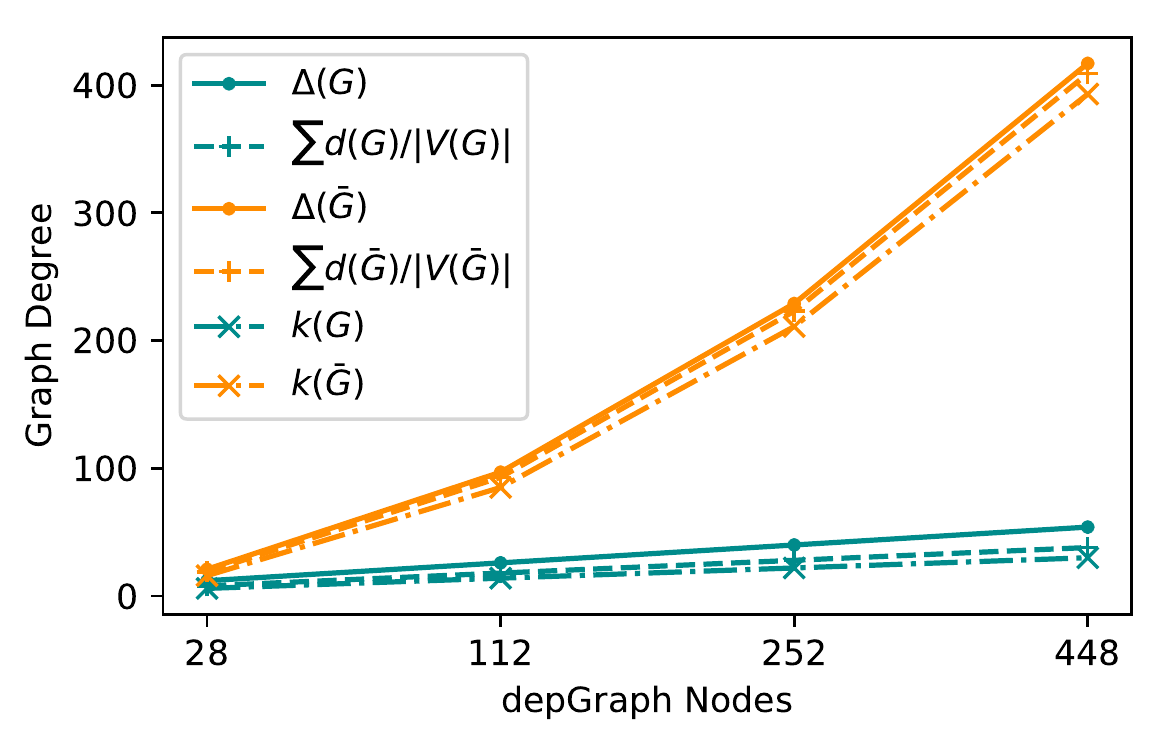}
		\caption{Average and maximum degree and degeneracy}
		\label{fig: depgraph-nodes-degree}
	\end{minipage}
	\hfill
	\begin{minipage}{.3\linewidth}
		\centering
		\includegraphics[width=\linewidth]{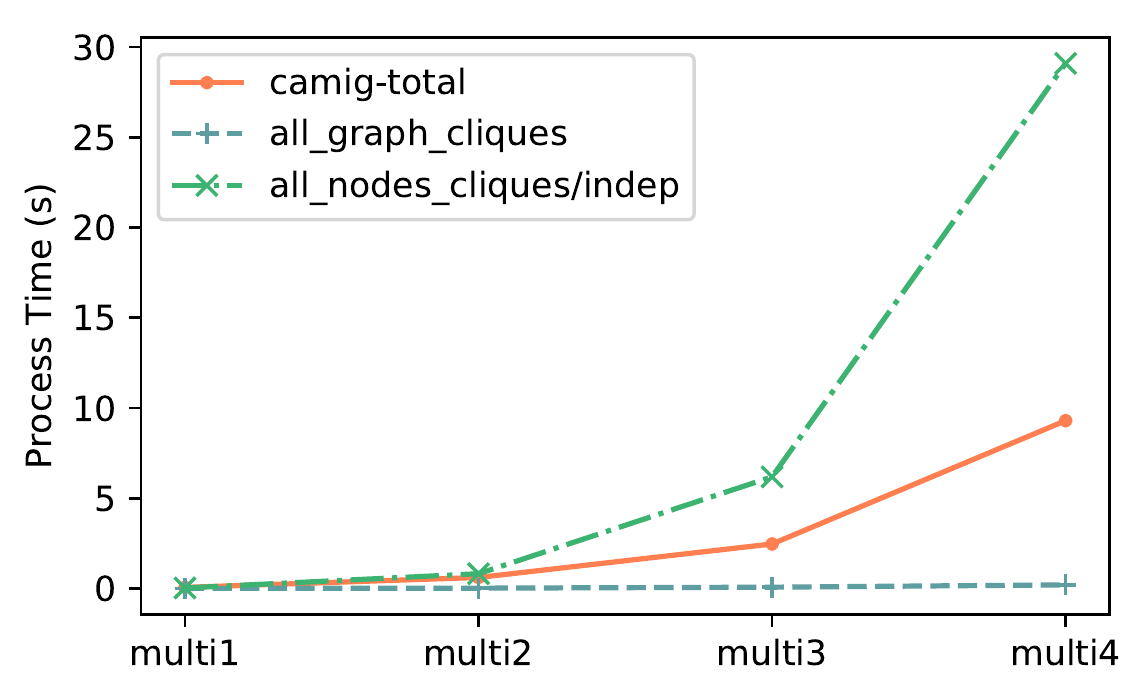}
		\caption{Runtime of CAMIG, all maximal cliques, and all maximal cliques/independent sets of nodes}
		\label{fig: runtime-camig}
	\end{minipage}
	\label{fig: runtime}
\end{figure*}

In this section, we analyze the time complexity of the proposed CAMIG algorithm.
The experiments were run in the computer with i7-7500U CPU with 2.70 GHz, and 15.9 GB RAM in Windows 10 64-bit Operating System. 
Fig. \ref{fig: runtime-optimal} illustrates that the runtime of the optimal solution solved by MIP solver is increased exponentially against the linear growth of the problem size. The runtime of the optimal solution on average is 3.07s, 251.51s, 5373.35s, and 42388.0s in 4 scenarios, respectively. Thus, it is impractical to generate the optimal result when facing the problem in real life. 

Fig. \ref{fig: depgraph-nodes-degree} illustrates the connectivity properties of dependency graph in terms of average degree $\sum {{{d\left( G \right)} \mathord{\left/
{\vphantom {{d\left( G \right)} {\left| {V\left( G \right)} \right|}}} \right.
\kern-\nulldelimiterspace} {\left| {V\left( G \right)} \right|}}} $, maximum degree $\Delta \left( G \right)$, and degeneracy of the dependency $k\left( G \right)$ and its complement $\bar{G}$. The number of maximal cliques is 12, 28, 42, 56 with the degeneracy (a measure of graph spareness) of the dependency graph as 6, 14, 22, 30. Therefore, it is much easier to generate all maximal cliques with a small degeneracy. However, the degeneracy of the complement dependency graph increased dramatically as 16, 85, 211, 393. Thus, it is impractical to generate all maximal cliques of the complement graph as the problem size becomes significantly large. In other words, Bron-Kerbosch Degeneracy algorithm can reach the worst-case runtime when the graph becomes considerably dense. As a result, it can only generate all 661 maximal independent sets in the smallest scale scenario (multi1). Fig. \ref{fig: runtime-camig} shows the runtime comparison of CAMIG in total processing time, finding all maximal cliques, and generating all maximal cliques and independent sets for every node. As shown in Algorithm \ref{alg: clique-indep-alg}, we do not need to calculate all maximal cliques and independent sets of every node in the graph. The all\_nodes\_cliques/indep illustrates the upper-bound of runtime. The processing time of CAMIG is increased linearly against the total src-dst node in resource dependency and the average degree or the degeneracy of the complement of the dependency graph as shown in Fig. \ref{fig: depgraph-nodes-degree}.

\begin{table*}[ht!]
	\centering
	\caption{Performance Comparison between LR-MMT, HostHits, CAMIG in energy-saving scenario}
	\resizebox{\linewidth}{!}{
		\begin{tabular}{|l|lll|cc|ccc|ccc|}
			\hline
			\multirow{2}{*}{algorithm} & \multirow{2}{*}{mig. num} & \multirow{2}{*}{$\sum$ total mig. time} & \multirow{2}{*}{$\sum$ dt. (s)} & \multicolumn{2}{c|}{workload num} & \multicolumn{3}{|c|}{serve time incl. and excl. timeout (s)} & \multicolumn{3}{|c|}{energy cost (Wh)} \\
			\cline{5-12}
			&                           &                                         &                                & total       & timeout       & total excl.    & avg. excl.   & avg. incl.  	& total			 & host             & switch            \\            
			\hline
			NoMig		&	-						&	-							& -			 					&	1506464		&0 		& 11214923.24	&7.44	&-						& 	1733432.22	&	1733432.22		& 0						 \\
			LR-MMT		&	3741					&	28038.66					& 355.079 			 			&	1399857		&106497 & 8700783.51	&6.21	&1105.63				&	470492.05	&	465412.23					 & 5079.82						 \\
			HostHits	&	3680					&	25872.79					& 359.032 			 			&	1416806		&89550	& 9028858.54	&6.37	&447.61			 		&	487254.15	&	481810.21					 & 5443.94						 \\
			CAMIG		&	2534					&	7453.37						& 178.071 			 			&	1458906		&47522	& 9945354.17	&6.82	&80.76			 		&	450966.81	&	447817.74					 & 3149.07						 \\		
			\hline
	\end{tabular}}
	\label{tb: energy-results}
\end{table*}

\begin{figure*}[ht!]
	\centering
	\begin{minipage}{0.48\linewidth}
		\centering
		\includegraphics[width=\linewidth]{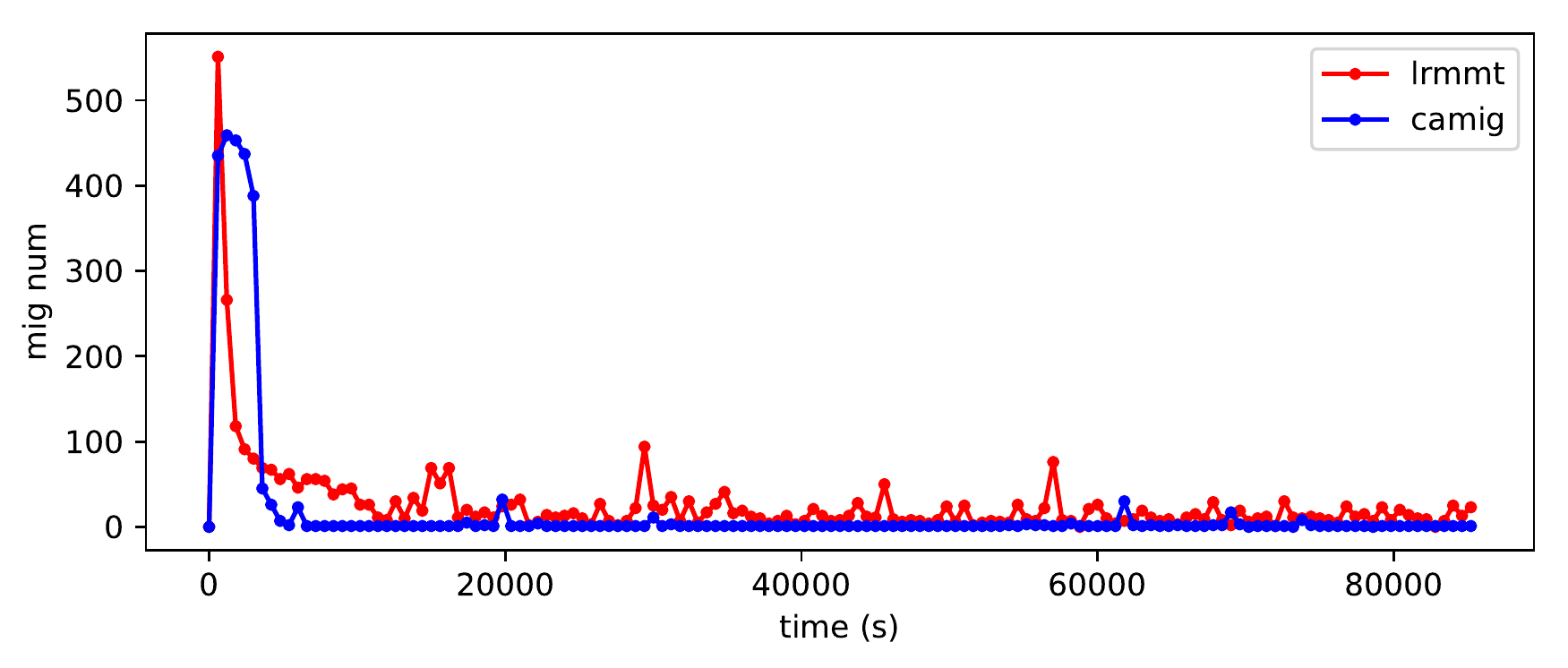}
		\caption{Migration number within each interval}
		\label{fig: energy-mignum}
	\end{minipage}%
	\quad
	\begin{minipage}{0.48\linewidth}
		\centering
		\includegraphics[width=\linewidth]{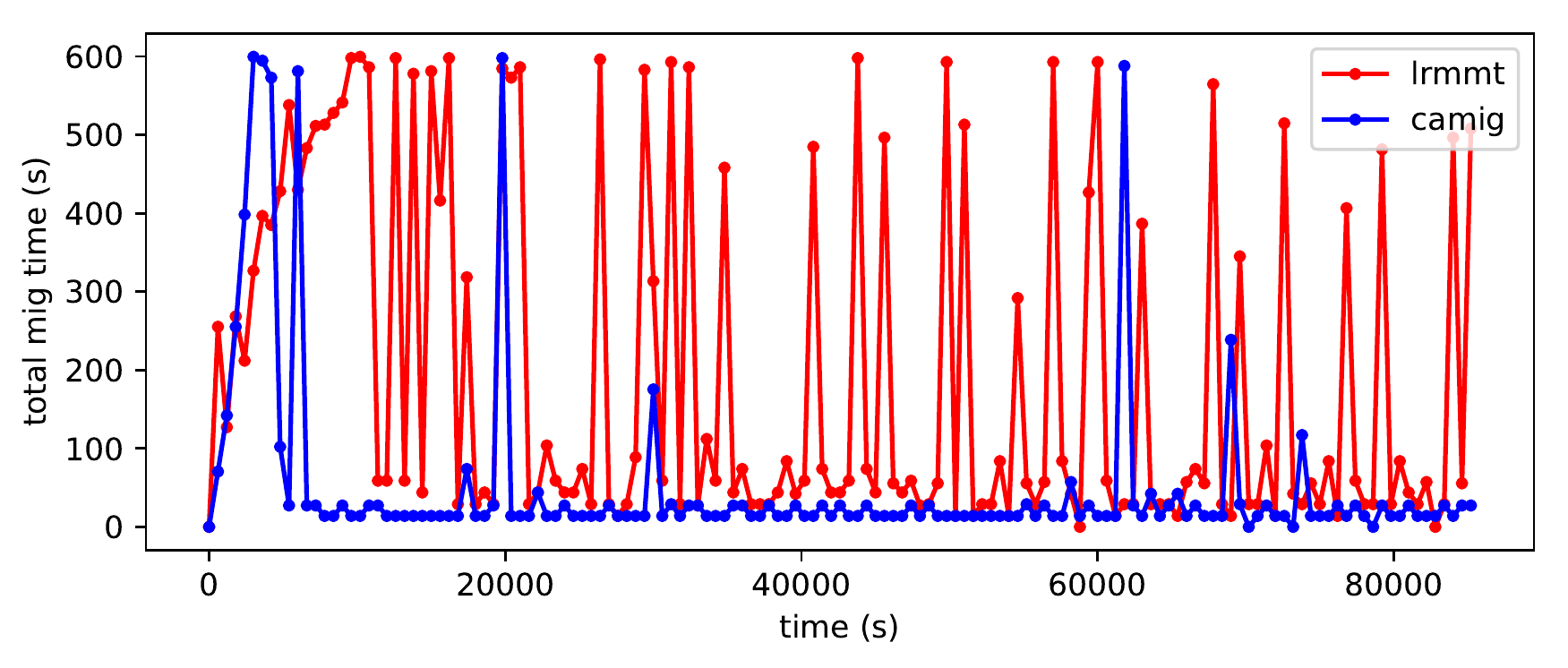}
		\caption{Total migration time within each interval}
		\label{fig: energy-migtime}
	\end{minipage}
	\label{fig: energy}
\end{figure*}
\subsection{Long-term Energy Saving Scenario}
To evaluate the proposed algorithm with the real-world long-term workloads \cite{park2006comon}, we compared CAMIG with LR-MMT \cite{beloglazov2012optimal} in the energy-saving scenario in terms of total migration time, migration numbers, downtime, total/average CPU serve time with and without the timeout workloads, and energy (power) cost of both hosts and switches. 

\subsubsection{Evaluation Configuration}
For the long-term experiments, we created a k-16 FatTree topology (1024 hosts) with 1 Gbps physical links between switches to simulate the environment with limited network resources for live migrations. Each physical host has 8 CPUs with 4000 MIPS, 1024 GB Memory size, 1000 GB Storage, and 1 Gbps network interface.
The real-world workload trace of CPU utilization from Planetlab \cite{park2006comon} was used for the experiments running in 24 hours. There are 1052 CPU utilization files mapping to the same amount of VMs. We generated the workloads based on the MIPS requirement and the CPU utilization varied along the time. In order to illustrate the influence of multiple migration performance, there is no application traffic between different VMs other than the migration flows. There are 4 flavors of VM: 2 vCPUs, [2500, 2000, 1000, 1000] MIPS, [2, 4, 4, 2] GB RAM, 100 Mbps virtual bandwidth, and 4 GB Disk Size. The initial placement of VMs are allocated based on the optimization criteria defined by LR-MMT~\cite{beloglazov2012optimal}.

The LR-MMT algorithm utilizes the Local Regression (LR) method to predict overloading hosts in the upcoming monitor interval. Minimum Migration Time (MMT) policy is used for VM selection to minimize migration overheads. During each monitoring interval of dynamic resource management, CAMIG, as a flexible algorithm, utilizes the same local regression to detect over/under-utilized hosts. In LR-MMT, though there are many equivalent optimal destinations, it only chooses the first fit. For the sake of fair comparison, destination candidates used in CAMIG are provided by the same energy-saving policy in LR-MMT.

\subsubsection{Evaluation Results}
As shown in Table \ref{tb: energy-results}, CAMIG algorithm outperforms both LR-MMT and HostHits. The total energy consumption under no dynamic resource management is 1733432.22 Wh. The LR-MMT algorithm saves 72.86\% energy consumption. Comparing CAMIG with LR-MMT, the host and switch energy consumptions are 3.78\% and 38.01\% less, respectively. The total migration number is 32.26\% less, the sum of total migration time of each monitoring interval is 73.42\% less, the total downtime is 49.85\% less than the LR-MMT algorithm. The performance improvements in total migration time also result in fewer workload timeouts and CPU resource shortages. For VM processing, the average CPU server time is 92.70\% less when there is no timeout mechanism. With a timeout mechanism, CAMIG also reduces the workload timeout by 14.30\% compared to the LR-MMT.

As the sum of total migration time and total migration time of each monitoring interval shown in Table \ref{tb: energy-results} and Fig. \ref{fig: energy-migtime}, within the 24 hours experiment, the performance of CAMIG in multiple migration scheduling is largely better than the LR-MMT. A shorter total migration time during each monitoring interval means a quicker state convergence for minimizing the over-utilization period and maximizing the energy-saving through VM consolidation for under-utilizing hosts. In other words, minimizing the dependencies among multiple migrations is not only critical for the migration scheduling, but also for the dynamic resource management that provides the migration list.  

During the experiments, we find out that there are relatively large equivalent destination candidates in terms of energy saving. Therefore, by exploring the concurrency score among these candidates, we can minimize the resource dependencies among the migrations. As shown in Fig. \ref{fig: energy-mignum}, there are more migrations in CAMIG from 1200s to 3600s than LR-MMT. It is because in LR-MMT once the candidate is used it will be excluded from the remaining destinations. However, by choosing equivalent hosts during the destination selection, CAMIG algorithm enables more available destinations for VMs which need to be migrated from both under and over-utilized hosts. Thus, CAMIG algorithm actually produces fewer migrations in the remaining monitor intervals. It also illustrates that in some cases even the total migration number of CAMIG is larger, the total migration time is much smaller due to the minimum dependency among the migrations. Fig. \ref{fig: energy-migtime} shows that, under certain circumstances (the peak migration time at 20000 second), even if there is a small number of migration tasks, the total migration time is still very large. Due to the nature of the consolidation algorithm, there are many migration tasks sharing the same destination or source hosts. Therefore, in traditional architectures, such as FatTree or even the dedicated migration network, it is inevitable that the convergence of multiple migrations is slower.
As a result, the performance of multiple migration scheduling may be limited by this nature of resource competition among the consolidating VM migrations. 

In summary, the evaluation demonstrates that, CAMIG can efficiently minimize the resource dependency among multiple migration tasks and achieve the objective of dynamic resource management in the long run.  Thus, it also improves the performance of dynamic resource management algorithms in terms of QoS and energy consumption.

\section{Conclusions} \label{section: conclusion}
To the best of our knowledge, we are the first to consider the problem of minimizing the resource dependency of migration requests in dynamic resource management. We formally established a MIP model for the problem and proposed generic concurrency-aware migration selection algorithm (CAMIG). We conducted experiments to compare our proposed algorithms with existing dynamic resource management policies in load balancing and energy-saving scenarios by using both random synthetic setup and real trace data. Without changing the framework of existing policies, the results indicate that CAMIG can largely improve the performance of multiple migrations by up to 91.90\% while achieving the target of dynamic resource management efficiently with near-linear computation growth in practice. In the long-term experiments, it can also reduce the total migration number, service downtime and management target in the host and switch energy consumptions.

\appendices

\section*{Acknowledgments}
This work is partially supported by an Australian Research Council (ARC) Discovery Project (ID: DP160102414) and a China Scholarship Council - University of Melbourne PhD Scholarship. We thank Editor-in-Chief, Associate Editor, anonymous reviewers, Redowan Mahmud, Linnan Ruan, Tawfiqul Islam, and Shashikant Ilager for their valuable comments and suggestions to help improve the paper.

\bibliographystyle{IEEEtran}
\bibliographystyle{abbrv}
\bibliography{ref}

\begin{thebibliography}{10}
\providecommand{\url}[1]{#1}
\csname url@samestyle\endcsname
\providecommand{\newblock}{\relax}
\providecommand{\bibinfo}[2]{#2}
\providecommand{\BIBentrySTDinterwordspacing}{\spaceskip=0pt\relax}
\providecommand{\BIBentryALTinterwordstretchfactor}{4}
\providecommand{\BIBentryALTinterwordspacing}{\spaceskip=\fontdimen2\font plus
\BIBentryALTinterwordstretchfactor\fontdimen3\font minus
  \fontdimen4\font\relax}
\providecommand{\BIBforeignlanguage}[2]{{%
\expandafter\ifx\csname l@#1\endcsname\relax
\typeout{** WARNING: IEEEtran.bst: No hyphenation pattern has been}%
\typeout{** loaded for the language `#1'. Using the pattern for}%
\typeout{** the default language instead.}%
\else
\language=\csname l@#1\endcsname
\fi
#2}}
\providecommand{\BIBdecl}{\relax}
\BIBdecl

\bibitem{cziva2018dynamic}
R.~Cziva, C.~Anagnostopoulos, and D.~P. Pezaros, ``Dynamic, latency-optimal vnf
  placement at the network edge,'' in \emph{Proceedings of IEEE Conference on
  Computer Communications (INFOCOM)}.\hskip 1em plus 0.5em minus 0.4em\relax
  IEEE, 2018, pp. 693--701.

\bibitem{wood2009sandpiper}
T.~Wood, P.~Shenoy, A.~Venkataramani, and M.~Yousif, ``Sandpiper: Black-box and
  gray-box resource management for virtual machines,'' \emph{Computer
  Networks}, vol.~53, no.~17, pp. 2923--2938, 2009.

\bibitem{verma2008pmapper}
A.~Verma, P.~Ahuja, and A.~Neogi, ``pmapper: power and migration cost aware
  application placement in virtualized systems,'' in \emph{Proceedings of the
  9th ACM/IFIP/USENIX International Conference on Middleware
  (Middleware)}.\hskip 1em plus 0.5em minus 0.4em\relax Springer, 2008, pp.
  243--264.

\bibitem{mann2012}
V.~Mann, A.~Gupta, P.~Dutta, A.~Vishnoi, P.~Bhattacharya, R.~Poddar, and
  A.~Iyer, ``Remedy: Network-aware steady state vm management for data
  centers,'' in \emph{Proceedings of International Conference on Research in
  Networking}.\hskip 1em plus 0.5em minus 0.4em\relax Springer, 2012, pp.
  190--204.

\bibitem{beloglazov2012optimal}
A.~Beloglazov and R.~Buyya, ``Optimal online deterministic algorithms and
  adaptive heuristics for energy and performance efficient dynamic
  consolidation of virtual machines in cloud data centers,'' \emph{Concurrency
  and Computation: Practice and Experience}, vol.~24, no.~13, pp. 1397--1420,
  2012.

\bibitem{clark2005live}
C.~Clark, K.~Fraser, S.~Hand, J.~G. Hansen, E.~Jul, C.~Limpach, I.~Pratt, and
  A.~Warfield, ``Live migration of virtual machines,'' in \emph{Proceedings of
  the 2nd conference on Symposium on Networked Systems Design and
  Implementation-Volume 2}.\hskip 1em plus 0.5em minus 0.4em\relax USENIX,
  2005, pp. 273--286.

\bibitem{redhat-criu}
\BIBentryALTinterwordspacing
``{Container migration with Podman on RHEL},'' accessed 22 Jan 2020. [Online].
  Available:
  \url{https://www.redhat.com/en/blog/container-migration-podman-rhel}
\BIBentrySTDinterwordspacing

\bibitem{google-e2}
\BIBentryALTinterwordspacing
``{Dynamic resource management in E2 VMs},'' accessed 29 June 2021. [Online].
  Available: \url{https://cloud.google.com/blog/products/
  compute/understanding-dynamic-resource-management-in-e2-vms}
\BIBentrySTDinterwordspacing

\bibitem{verma2015large}
A.~Verma, L.~Pedrosa, M.~Korupolu, D.~Oppenheimer, E.~Tune, and J.~Wilkes,
  ``Large-scale cluster management at google with borg,'' in \emph{Proceedings
  of the Tenth European Conference on Computer Systems}, 2015, pp. 1--17.

\bibitem{borg-criu}
V.~Marmol and A.~Tucker, ``Task migration at scale using criu,'' in \emph{Linux
  Plumbers Conference}, 2018.

\bibitem{ruprecht2018vm}
A.~Ruprecht, D.~Jones, D.~Shiraev, G.~Harmon, M.~Spivak, M.~Krebs,
  M.~Baker-Harvey, and T.~Sanderson, ``Vm live migration at scale,'' \emph{ACM
  SIGPLAN Notices}, vol.~53, no.~3, pp. 45--56, 2018.

\bibitem{xu2014}
F.~Xu, F.~Liu, L.~Liu, H.~Jin, B.~Li, and B.~Li, ``iaware: Making live
  migration of virtual machines interference-aware in the cloud,'' \emph{IEEE
  Transaction on Computers}, vol.~63, no.~12, pp. 3012--3025, 2014.

\bibitem{bi2015application}
J.~Bi, H.~Yuan, W.~Tan, M.~Zhou, Y.~Fan, J.~Zhang, and J.~Li,
  ``Application-aware dynamic fine-grained resource provisioning in a
  virtualized cloud data center,'' \emph{IEEE Transactions on Automation
  Science and Engineering}, vol.~14, no.~2, pp. 1172--1184, 2015.

\bibitem{wu2016energy}
Q.~Wu, F.~Ishikawa, Q.~Zhu, and Y.~Xia, ``Energy and migration cost-aware
  dynamic virtual machine consolidation in heterogeneous cloud datacenters,''
  \emph{IEEE Transactions on Services Computing}, vol.~12, no.~4, pp. 550--563,
  2016.

\bibitem{peng2019mobility}
Q.~Peng, Y.~Xia, Z.~Feng, J.~Lee, C.~Wu, X.~Luo, W.~Zheng, S.~Pang, H.~Liu,
  Y.~Qin \emph{et~al.}, ``Mobility-aware and migration-enabled online edge user
  allocation in mobile edge computing,'' in \emph{Proceedings of 2019 IEEE
  International Conference on Web Services (ICWS)}.\hskip 1em plus 0.5em minus
  0.4em\relax IEEE, 2019, pp. 91--98.

\bibitem{jiang2020network}
Y.~Jiang, J.~Wang, J.~Shi, J.~Zhu, and L.~Teng, ``Network-aware virtual machine
  migration based on gene aggregation genetic algorithm,'' \emph{Mobile
  Networks and Applications}, vol.~25, pp. 1457–--1468, 2020.

\bibitem{ghorbani2012}
S.~Ghorbani and M.~Caesar, ``Walk the line: consistent network updates with
  bandwidth guarantees,'' in \emph{Proceedings of the first Workshop on Hot
  Topics in Software Defined Networks}.\hskip 1em plus 0.5em minus 0.4em\relax
  ACM, 2012, pp. 67--72.

\bibitem{bari2014cqncr}
M.~F. Bari, M.~F. Zhani, Q.~Zhang, R.~Ahmed, and R.~Boutaba, ``Cqncr: Optimal
  vm migration planning in cloud data centers,'' in \emph{Proceedings of the
  Networking Conference (IFIP)}.\hskip 1em plus 0.5em minus 0.4em\relax IEEE,
  2014, pp. 1--9.

\bibitem{wang2017virtual}
H.~{Wang}, Y.~{Li}, Y.~{Zhang}, and D.~{Jin}, ``Virtual machine migration
  planning in software-defined networks,'' \emph{IEEE Transactions on Cloud
  Computing}, vol.~7, no.~4, pp. 1168--1182, 2019.

\bibitem{he2021sla}
T.~He, A.~N. Toosi, and R.~Buyya, ``Sla-aware multiple migration planning and
  scheduling in sdn-nfv-enabled clouds,'' \emph{Journal of Systems and
  Software}, vol. 176, p. 110943, 2021.

\bibitem{akoush2010}
S.~Akoush, R.~Sohan, A.~Rice, A.~W. Moore, and A.~Hopper, ``Predicting the
  performance of virtual machine migration,'' in \emph{Proceedings of 2010 IEEE
  International Symposium on Modeling, Analysis and Simulation of Computer and
  Telecommunication Systems (MASCOTS)}.\hskip 1em plus 0.5em minus 0.4em\relax
  IEEE, 2010, Conference Proceedings, pp. 37--46.

\bibitem{jo2017machine}
C.~Jo, Y.~Cho, and B.~Egger, ``A machine learning approach to live migration
  modeling,'' in \emph{Proceedings of the 2017 Symposium on Cloud Computing},
  2017, pp. 351--364.

\bibitem{he2019performance}
T.~He, A.~N. Toosi, and R.~Buyya, ``Performance evaluation of live virtual
  machine migration in sdn-enabled cloud data centers,'' \emph{Journal of
  Parallel and Distributed Computing}, vol. 131, pp. 55--68, 2019.

\bibitem{singh2008server}
A.~Singh, M.~Korupolu, and D.~Mohapatra, ``Server-storage virtualization:
  integration and load balancing in data centers,'' in \emph{Proceedings of
  ACM/IEEE conference on Supercomputing}.\hskip 1em plus 0.5em minus
  0.4em\relax IEEE, 2008, pp. 1--12.

\bibitem{xiao2012dynamic}
Z.~Xiao, W.~Song, and Q.~Chen, ``Dynamic resource allocation using virtual
  machines for cloud computing environment,'' \emph{IEEE Transactions on
  Parallel and Distributed Systems}, vol.~24, no.~6, pp. 1107--1117, 2012.

\bibitem{mckeown2008openflow}
N.~McKeown, T.~Anderson, H.~Balakrishnan, G.~Parulkar, L.~Peterson, J.~Rexford,
  S.~Shenker, and J.~Turner, ``Openflow: enabling innovation in campus
  networks,'' \emph{ACM SIGCOMM Computer Communication Review}, vol.~38, no.~2,
  pp. 69--74, 2008.

\bibitem{son2017taxonomy}
J.~Son and R.~Buyya, ``A taxonomy of software-defined networking (sdn)-enabled
  cloud computing,'' \emph{ACM Comput. Surv.}, vol.~51, no.~3, pp. 59:1--59:36,
  May 2018.

\bibitem{criu}
``{CRIU},'' \url{https://criu.org/Iterative_migration}, accessed 22 Feb 2020,
  2020.

\bibitem{lantz2010network}
B.~Lantz, B.~Heller, and N.~McKeown, ``A network in a laptop: rapid prototyping
  for software-defined networks,'' in \emph{Proceedings of the 9th ACM SIGCOMM
  Workshop on Hot Topics in Networks}, 2010, pp. 1--6.

\bibitem{tsakalozos2017live}
K.~Tsakalozos, V.~Verroios, M.~Roussopoulos, and A.~Delis, ``Live vm migration
  under time-constraints in share-nothing iaas-clouds,'' \emph{IEEE
  Transactions on Parallel and Distributed Systems}, vol.~28, no.~8, pp.
  2285--2298, 2017.

\bibitem{bron1973algorithm}
C.~Bron and J.~Kerbosch, ``Algorithm 457: finding all cliques of an undirected
  graph,'' \emph{Communications of the ACM}, vol.~16, no.~9, pp. 575--577,
  1973.

\bibitem{lawler1980generating}
E.~L. Lawler, J.~K. Lenstra, and A.~Rinnooy~Kan, ``Generating all maximal
  independent sets: Np-hardness and polynomial-time algorithms,'' \emph{SIAM
  Journal on Computing}, vol.~9, no.~3, pp. 558--565, 1980.

\bibitem{tomita2006worst}
E.~Tomita, A.~Tanaka, and H.~Takahashi, ``The worst-case time complexity for
  generating all maximal cliques and computational experiments,''
  \emph{Theoretical computer Science}, vol. 363, no.~1, pp. 28--42, 2006.

\bibitem{lick1970k}
D.~R. Lick and A.~T. White, ``k-degenerate graphs,'' \emph{Canadian Journal of
  Mathematics}, vol.~22, no.~5, pp. 1082--1096, 1970.

\bibitem{cazals2008note}
F.~Cazals and C.~Karande, ``A note on the problem of reporting maximal
  cliques,'' \emph{Theoretical Computer Science}, vol. 407, no. 1-3, pp.
  564--568, 2008.

\bibitem{eppstein2010listing}
D.~Eppstein, M.~L{\"o}ffler, and D.~Strash, ``Listing all maximal cliques in
  sparse graphs in near-optimal time,'' in \emph{Proceedings of International
  Symposium on Algorithms and Computation}.\hskip 1em plus 0.5em minus
  0.4em\relax Springer, 2010, pp. 403--414.

\bibitem{park2006comon}
K.~Park and V.~S. Pai, ``Comon: a mostly-scalable monitoring system for
  planetlab,'' \emph{ACM SIGOPS Operating Systems Review}, vol.~40, no.~1, pp.
  65--74, 2006.

\bibitem{son2019cloudsimsdn}
J.~Son, T.~He, and R.~Buyya, ``Cloudsimsdn-nfv: Modeling and simulation of
  network function virtualization and service function chaining in edge
  computing environments,'' \emph{Software: Practice and Experience}, vol.~49,
  no.~12, pp. 1748--1764, 2019.

\bibitem{son2015cloudsimsdn}
J.~Son, A.~V. Dastjerdi, R.~N. Calheiros, X.~Ji, Y.~Yoon, and R.~Buyya,
  ``Cloudsimsdn: Modeling and simulation of software-defined cloud data
  centers,'' in \emph{Proceedings of IEEE/ACM International Symposium on
  Cluster, Cloud and Grid Computing}.\hskip 1em plus 0.5em minus 0.4em\relax
  IEEE, 2015, pp. 475--484.

\end{thebibliography}

\begin{IEEEbiography}[{\includegraphics[width=1in,height=1.25in,clip,keepaspectratio]{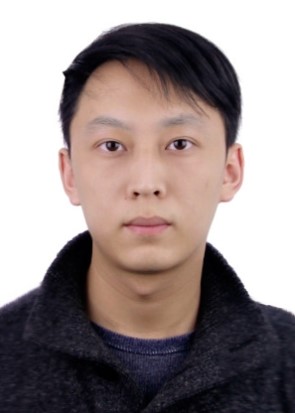}}]{TianZhang He}
	received the B.S. and M.S. degrees from Northeastern University, China, in 2014 and 2017, respectively.
	He is working towards the Ph.D. degree at the Cloud Computing and Distributed Systems (CLOUDS) Laboratory,
	University of Melbourne, Australia. His research interests include Software-Defined Networking, Edge and Cloud Computing, and Network Function Virtualization.
\end{IEEEbiography}
\begin{IEEEbiography}[{\includegraphics[width=1in,height=1.25in,clip,keepaspectratio]{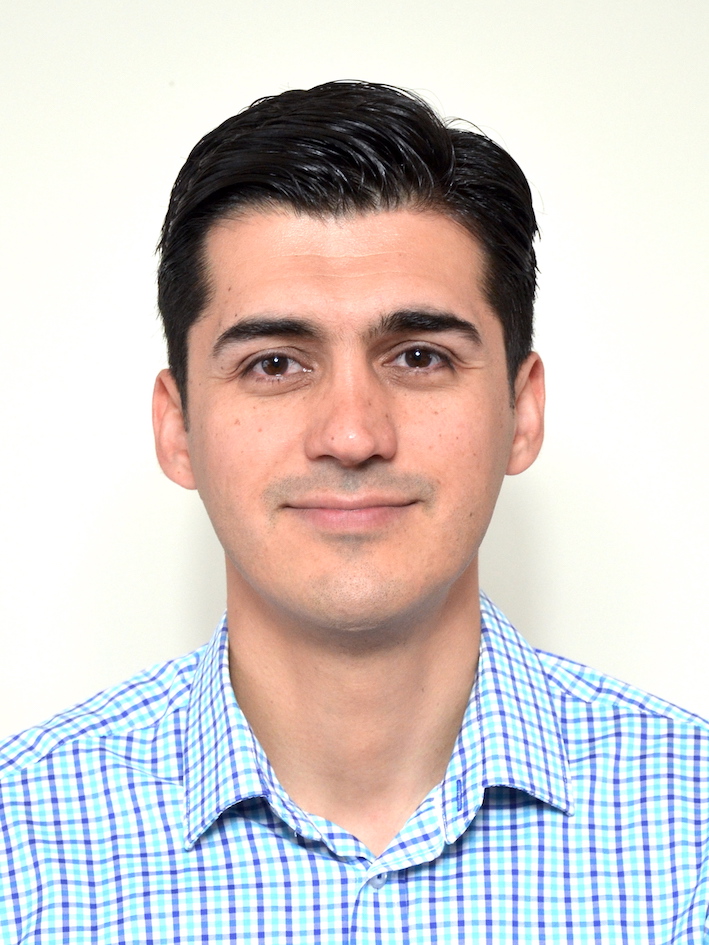}}]{Adel N. Toosi}
	is a lecturer (a.k.a. Assistant Professor) at the Department of Software Systems and Cybersecurity, Faculty of Information Technology, Monash University, Australia. Before joining Monash, Dr Toosi was a Postdoctoral Research Fellow at the University of Melbourne from 2015 to 2018. He received his Ph.D. degree from the School of Computing and Information Systems at the University of Melbourne in 2015.  
	His research interests include Cloud/Fog/Edge Computing, Software-Defined Networking, Green Computing and Energy Efficiency. 
	For further information, please visit his homepage: http://adelnadjarantoosi.info/.
\end{IEEEbiography}
\begin{IEEEbiography}[{\includegraphics[width=1in,height=1.25in,clip,keepaspectratio]{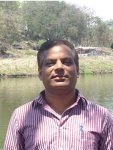}}]{Rajkumar Buyya}
	is a Redmond Barry distinguished professor and the
	director with the Cloud Computing and Distributed Systems (CLOUDS) Laboratory, University of Melbourne, Australia. 
	He has authored over 625 publications and seven text books including ``Mastering Cloud
	Computing" published by McGraw Hill, China Machine Press, and Morgan Kaufmann for Indian, Chinese and international markets, respectively. 
	He is one of the highly cited authors in computer science and software engineering
	worldwide (h-index=149, g-index=322, 116,400+ citations).
\end{IEEEbiography}
\vfill
\vfill
\vfill

\end{document}